\documentclass[sigconf, nonacm]{acmart}
\usepackage{tikz}
\usetikzlibrary{calc,patterns,arrows.meta,decorations.pathreplacing}
\usepackage{amsmath,mathtools,booktabs,graphicx}
\usepackage{algorithm,algpseudocode}
\newtheorem{theorem}{Theorem}
\newtheorem{lemma}[theorem]{Lemma}
\newtheorem{proposition}[theorem]{Proposition}
\newcommand{\Ts}{T^{*}}

\newcommand{\ZN}{\mathbb{Z}_N}
\AtBeginDocument{%
  }

\begin{document}

\title{PASCAL: A Progress Divergence-Aware Shared-Cache Model}

\author{Zhongchun Zhou}
\email{zzhouch@connect.ust.hk}
\affiliation{%
  \institution{The Hong Kong University of Science and Technology}
  \streetaddress{Clear Water Bay}
  \city{Kowloon}
  \country{Hong Kong}
  }
\orcid{0009-0000-7037-7418}

\author{Ya Wang}
\email{ywangmu@connect.ust.hk}
\affiliation{%
  \institution{The Hong Kong University of Science and Technology}
  \streetaddress{Clear Water Bay}
  \city{Kowloon}
  \country{Hong Kong}
  }
\orcid{0009-0009-1114-7159}

\author{Songtao Mao}
\email{smao13@jhu.edu}
\affiliation{%
  \institution{Johns Hopkins University}
  \streetaddress{Charles Street}
  \city{Baltimore}
  \state{MD}
  \country{USA}
  }
\orcid{0000-0002-8292-3161}

\author{Chengtao Lai}
\email{claiaf@connect.ust.hk}
\affiliation{%
  \institution{The Hong Kong University of Science and Technology}
  \streetaddress{Clear Water Bay}
  \city{Kowloon}
  \country{Hong Kong}
  }
\orcid{0000-0002-9547-9653}
\author{Wei Zhang}
\authornote{Corresponding author}
\email{eeweiz@ust.hk}
\orcid{0000-0002-7622-6714}
\affiliation{%
  \institution{The Hong Kong University of Science and Technology}
  \streetaddress{Clear Water Bay}
  \city{Kowloon}
  \country{Hong Kong}}
\newcommand{\songtaonote}[1]{\textcolor{red}{Songtao's note: #1}}
\renewcommand{\shortauthors}{Zhou et al.}
\newcommand{\CIGemmGBHigh}{8.8}
\newcommand{\CIGemmGBLow}{4.2}
\newcommand{\CIGemmThorHigh}{6.5}
\newcommand{\CIGemmThorLow}{2.2}
\newcommand{\CIOverallPascalHigh}{15.5}
\newcommand{\CIOverallPascalLow}{9.3}
\newcommand{\CIOverallSDCMHigh}{30.9}
\newcommand{\CIOverallSDCMLow}{21.8}
\newcommand{\CIOverallTileHigh}{41.7}
\newcommand{\CIOverallTileLow}{30.0}
\newcommand{\CIScanGBHigh}{32.3}
\newcommand{\CIScanGBLow}{9.6}
\newcommand{\CIScanThorHigh}{22.1}
\newcommand{\CIScanThorLow}{14.5}
\newcommand{\CostGBFitSec}{0.09}
\newcommand{\CostGBPanelPredictSec}{0.22}
\newcommand{\CostGBPanelProfileMin}{17}
\newcommand{\CostGBPredictMs}{0.50}
\newcommand{\CostGBProfileSec}{2.5}
\newcommand{\CostGBRefKernelMin}{2.0}
\newcommand{\CostGBRefRuns}{756}
\newcommand{\CostGBRefWallMin}{29}
\newcommand{\CostGemmGBFitMs}{7}
\newcommand{\CostGemmGBPredictMs}{0.8}
\newcommand{\CostGemmGBRefWallMin}{5}
\newcommand{\CostGemmThorFitMs}{7}
\newcommand{\CostGemmThorPredictMs}{0.8}
\newcommand{\CostGemmThorRefWallMin}{15}
\newcommand{\CostHost}{Apple M2 Pro}
\newcommand{\CostPipelineUs}{15}
\newcommand{\CostThorFitSec}{0.09}
\newcommand{\CostThorPanelPredictSec}{0.22}
\newcommand{\CostThorPanelProfileMin}{52}
\newcommand{\CostThorPredictMs}{0.53}
\newcommand{\CostThorProfileSec}{7.4}
\newcommand{\CostThorRefKernelMin}{4.7}
\newcommand{\CostThorRefRuns}{756}
\newcommand{\CostThorRefWallMin}{92}
\newcommand{\DeltaGemmGBML}{0.19}
\newcommand{\DeltaGemmGBMLHigh}{2.09}
\newcommand{\DeltaGemmGBMLLow}{$-$1.89}
\newcommand{\DeltaGemmThorML}{2.90}
\newcommand{\DeltaGemmThorMLHigh}{5.96}
\newcommand{\DeltaGemmThorMLLow}{0.02}
\newcommand{\DeltaOverallSDCM}{14.1}
\newcommand{\DeltaOverallSDCMHigh}{19.3}
\newcommand{\DeltaOverallSDCMLow}{9.1}
\newcommand{\DeltaOverallTileSight}{23.7}
\newcommand{\DeltaOverallTileSightHigh}{30.1}
\newcommand{\DeltaOverallTileSightLow}{17.6}
\newcommand{\DeltaScanGBML}{3.85}
\newcommand{\DeltaScanGBMLHigh}{12.94}
\newcommand{\DeltaScanGBMLLow}{$-$5.82}
\newcommand{\DeltaScanThorML}{1.19}
\newcommand{\DeltaScanThorMLHigh}{4.20}
\newcommand{\DeltaScanThorMLLow}{$-$1.95}
\newcommand{\EtoEGemmGBBoundGapHigh}{31.5}
\newcommand{\EtoEGemmGBBoundGapLow}{12.9}
\newcommand{\EtoEGemmGBBoundMeasured}{10.11}
\newcommand{\EtoEGemmGBBoundPascal}{19.85}
\newcommand{\EtoEGemmGBBoundRuns}{27}
\newcommand{\EtoEGemmGBBoundTile}{42.06}
\newcommand{\EtoEGemmGBGapHigh}{8.5}
\newcommand{\EtoEGemmGBGapLow}{1.9}
\newcommand{\EtoEGemmGBHiddenMeasured}{11.08}
\newcommand{\EtoEGemmGBHiddenPascal}{11.13}
\newcommand{\EtoEGemmGBHiddenTile}{10.93}
\newcommand{\EtoEGemmGBMeasured}{10.92}
\newcommand{\EtoEGemmGBPascal}{13.14}
\newcommand{\EtoEGemmGBTile}{18.17}
\newcommand{\EtoEGemmThorBoundGapHigh}{6.4}
\newcommand{\EtoEGemmThorBoundGapLow}{1.2}
\newcommand{\EtoEGemmThorBoundMeasured}{78.97}
\newcommand{\EtoEGemmThorBoundPascal}{80.08}
\newcommand{\EtoEGemmThorBoundRuns}{20}
\newcommand{\EtoEGemmThorBoundTile}{83.73}
\newcommand{\EtoEGemmThorGapHigh}{0.9}
\newcommand{\EtoEGemmThorGapLow}{$-$0.6}
\newcommand{\EtoEGemmThorHiddenMeasured}{82.05}
\newcommand{\EtoEGemmThorHiddenPascal}{82.03}
\newcommand{\EtoEGemmThorHiddenTile}{81.46}
\newcommand{\EtoEGemmThorMeasured}{81.54}
\newcommand{\EtoEGemmThorPascal}{81.70}
\newcommand{\EtoEGemmThorTile}{81.84}
\newcommand{\EtoEOverallMeasured}{48.14}
\newcommand{\EtoEOverallPascal}{48.73}
\newcommand{\EtoEOverallTile}{50.02}
\newcommand{\EtoEScanGBBoundRuns}{0}
\newcommand{\EtoEScanGBMeasured}{23.89}
\newcommand{\EtoEScanGBPascal}{23.89}
\newcommand{\EtoEScanGBTile}{23.89}
\newcommand{\EtoEScanThorBoundRuns}{2}
\newcommand{\EtoEScanThorMeasured}{76.20}
\newcommand{\EtoEScanThorPascal}{76.21}
\newcommand{\EtoEScanThorTile}{76.21}
\newcommand{\ErrGBDepthA}{63.0}
\newcommand{\ErrGBDepthB}{11.6}
\newcommand{\ErrGBDepthC}{0.2}
\newcommand{\ErrGBDepthD}{6.6}
\newcommand{\ErrGBDepthE}{29.6}
\newcommand{\ErrGBDepthF}{12.0}
\newcommand{\ErrGBDepthG}{12.9}
\newcommand{\ErrGBMedian}{6.0}
\newcommand{\ErrGBWOne}{1.25}
\newcommand{\ErrGBWithoutWorst}{10.5}
\newcommand{\ErrGBWorstShare}{50}
\newcommand{\ErrThorDepthA}{7.4}
\newcommand{\ErrThorDepthB}{11.2}
\newcommand{\ErrThorDepthC}{22.2}
\newcommand{\ErrThorDepthD}{17.0}
\newcommand{\ErrThorDepthE}{27.9}
\newcommand{\ErrThorDepthF}{18.2}
\newcommand{\ErrThorDepthG}{23.1}
\newcommand{\ErrThorMedian}{15.0}
\newcommand{\ErrThorWOne}{0.55}
\newcommand{\ErrThorWithoutWorst}{15.9}
\newcommand{\ErrThorWorstShare}{19}
\newcommand{\NcuGBSec}{2.2}
\newcommand{\NcuSpeedupGB}{4{,}300}
\newcommand{\NcuSpeedupThor}{13{,}600}
\newcommand{\NcuThorSec}{7.3}
\newcommand{\PGemmGBML}{0.027}
\newcommand{\PGemmGBSDCM}{1.2e-08}
\newcommand{\PGemmGBTileSight}{3.5e-09}
\newcommand{\PGemmThorML}{0.00046}
\newcommand{\PGemmThorSDCM}{1e-07}
\newcommand{\PGemmThorTileSight}{4e-08}
\newcommand{\PScanGBML}{0.078}
\newcommand{\PScanGBSDCM}{0.0051}
\newcommand{\PScanGBTileSight}{3.4e-05}
\newcommand{\PScanThorML}{0.19}
\newcommand{\PScanThorSDCM}{0.0055}
\newcommand{\PScanThorTileSight}{0.0038}
\newcommand{\PlainGBSec}{0.7}
\newcommand{\PlainThorSec}{0.9}
\newcommand{\SimLargestMin}{64}
\newcommand{\SimLargestMma}{45}
\newcommand{\SimLargestN}{2048}
\newcommand{\SimOneHigh}{143}
\newcommand{\SimOneLow}{52}
\newcommand{\SimPanelKHours}{67.5}
\newcommand{\SimRuns}{12}
\newcommand{\WinGemmGBML}{70}
\newcommand{\WinGemmGBSDCM}{85}
\newcommand{\WinGemmGBTileSight}{88}
\newcommand{\WinGemmThorML}{75}
\newcommand{\WinGemmThorSDCM}{83}
\newcommand{\WinGemmThorTileSight}{87}
\newcommand{\WinScanGBML}{67}
\newcommand{\WinScanGBSDCM}{81}
\newcommand{\WinScanGBTileSight}{81}
\newcommand{\WinScanThorML}{60}
\newcommand{\WinScanThorSDCM}{71}
\newcommand{\WinScanThorTileSight}{74}

\begin{abstract}
In modern AI accelerators and GPUs, many concurrent cores repeatedly access the same shared data. This pattern occurs in attention, where different query (Q) tiles share the same key and value (K/V) blocks, GEMM, where every tile in a row reads the same slice, and many other operators. We name this pattern shared cyclic scan. Due to a significant amount of data reuse in this pattern, the cache is expected to capture as much data reuse as possible and largely reduce requests sent to the main memory for both performance and energy consumption concerns. However, in reality, because of the intrinsic asynchrony of multi-cores, the actual cache miss rate and DRAM traffic can be much higher compared to ideal cases. In this paper, we propose PASCAL, a shared-cache model for shared cyclic scans. It calibrates finite-run traffic, which reflects progress divergence, on reference configurations and interpolates it along static program structure such as occupancy to predict the miss rate before execution. 
PASCAL supports software configuration exploration without target traces or counters at scales where cycle-accurate simulation is impractical, while its analysis gives a sharp $2\sigma-1$ sufficient capacity condition for preserving LRU sharing. 
Across held-out scans and GEMM on GB10 and Thor, PASCAL reaches 12.04\% balanced fill-equivalent miss-rate MAPE.
Replacing TileSight's cache component with PASCAL lowers GB10 GEMM latency MAPE from 18.17\% to 13.14\%.
\end{abstract}

\begin{CCSXML}
<ccs2012>
<concept>
<concept_id>10010520.10010521</concept_id>
<concept_desc>Computer systems organization~Architectures</concept_desc>
<concept_significance>300</concept_significance>
</concept>
<concept>
<concept_id>10010147.10010257</concept_id>
<concept_desc>Computing methodologies~Machine learning</concept_desc>
<concept_significance>300</concept_significance>
</concept>
<concept>
<concept_id>10010147.10010341</concept_id>
<concept_desc>Computing methodologies~Modeling and simulation</concept_desc>
<concept_significance>500</concept_significance>
</concept>
</ccs2012>
\end{CCSXML}

\ccsdesc[300]{Computer systems organization~Architectures}
\ccsdesc[500]{Computing methodologies~Modeling and simulation}

\keywords{cache, GPGPU, miss-rate prediction,
reuse distance, analytical model}


\maketitle
\section{Introduction}
Accurate performance prediction for deep learning applications can benefit both operator developers and hardware architects. While some metrics of an AI chip, including latencies and throughput of some components, are highly stable, predicting cache miss rates remains formidable due to: (1) the dynamicity and structural complexity of a cache and its replacement policy, and (2) the runtime variations in scheduling policies, inter-core memory request interleaving, and arbitration. Although some designs comprise only scratchpad memories (SPMs), a shared cache remains a common choice in NVIDIA and AMD GPUs and HUAWEI Ascend series~\cite{davinci}.

Among the factors above, inter-core memory request interleaving can cause a remarkable difference in cache-DRAM traffic. If requests to the same cache line from different cores are issued close enough in time, they can be captured by the L2 cache or some other structures, like the L2 Request Coalescer in NVIDIA GPUs. But if they are issued far apart in time, a cache miss may happen, leading to excessive cache-DRAM traffic, which can be several times the traffic of an ideal-case estimate. This phenomenon, caused by the intrinsic asynchrony of a multi-core processor, has been observed in attention kernels~\cite{simfa}.
However, current analytical models for deep learning, such as TileSight~\cite{tilesight} and LCM~\cite{lcm}, calculate the reuse distance profile with a static assumption. This can underestimate the cache miss rate significantly.

In contrast, our method measures traffic at a set of reference configurations and interpolates between them to predict an unmeasured configuration. To summarize, we make the following contributions:
\begin{itemize}
\item For the TTL (time-to-live) cache, we specialize the working set theory for shared cyclic scans, a memory access pattern common in AI workloads. We derive a closed-form formula to calculate exact miss counts.
\item We extend the theory beyond the TTL cache. For the LRU cache, a bound on its miss-rate deviation from the TTL cache is given. Additionally, a fractional residency-budget bound
applies to every demand-paging policy, including offline
Belady.
\item In our experiments, progress divergence across processor cores raises the cache miss rate and DRAM traffic several-fold. Prefetch depth, compute work, occupancy, and run length all affect it, and neither occupancy nor initial memory pressure alone predicts it. 
\item We propose a pipeline for the prediction of the cache miss rate. Considering the dynamic features of the progress divergence, it achieves a MAPE of 12.04\% and outperforms the main baselines. End-to-end experiments further prove the feasibility of connecting this model to a complete AI performance prediction pipeline.
\end{itemize}

\section{Background and Motivation}
\label{sec:background}
\subsection{Existing Cache Models}
Existing cache predictors obtain locality information in three main ways.
First, frequency-based stochastic models such as Che--IRM infer a steady-state
miss rate from request laws and cache capacity~\cite{che1,che2,che3}. They
cannot distinguish our progress-aligned and divergent executions: every worker has
the same marginal address frequency, while the number of fills per round can
range from 1 to $M$ because requests from different workers may share a
fill. Second, working-set, footprint, and stack-distance methods characterize
an observed access order~\cite{denning1,denning2,stack,hotl}; SDCM further
maps reuse distance to hit probability under an associativity and
random-mapping model~\cite{brehob1999}. Statistical sampling can reduce the
measurement cost~\cite{statcache,statstack}, but some source must still
provide the target execution's interleaving or reuse profile. Third, static
program analyses show that reuse profiles can sometimes be inferred before
execution~\cite{whole,array}. For freely progressing GPU workers, however,
the logical per-worker order does not determine the cross-worker order:
occupancy, generated memory instructions, prefetch depth, and intervening
computation change relative progress during execution. Tay and Zou developed a parameterized page-fault equation that relates memory size to page faults through a conjectured invariant capturing the interaction between reference behavior and replacement policy~\cite{tay2006}. PASCAL addresses a complementary question: how relative progress among concurrent scans affects shared-cache traffic at a given capacity?

\subsection{Motivation}
As mentioned earlier, even when the elements and the access order are identical for each worker, multiple factors may cause a large difference in cache-DRAM traffic. All workers start from the same initial condition, but in some cases, there can be a large difference in their progress after running for some time: some workers have finished a large amount of work, while others have only finished much less. This difference in progress in turn affects the number of reuses in the shared cache. Hence the motivation of this work is to \underline{figure out the dynamics in this system}: as the system executes, how does the progress of the cores diverge from each other? How does the number of cache misses change over time?
We therefore first present an exact solution to the fixed-phase geometry, and then use one reusable calibration per device and static target-code features to predict the accumulated traffic of an unseen dynamic configuration, without profiling that target.
\section{Static Geometry and Policy Bounds}
\label{sec:static}
This section asks two questions about a shared cyclic scan with $M$ workers.
First, if the workers' relative positions are given, how much memory
traffic results, and could a better replacement policy reduce it?
Sections~\ref{sec:model}--\ref{sec:policy} answer with an exact count
(Theorem~\ref{thm:ttl}) and a bound that holds for every policy
(Theorem~\ref{thm:policy}). 
Second, when do separated workers retain sharing, and when can cache hits
close a timing gap? Theorems~\ref{thm:coherence}
and~\ref{thm:feedback} answer this in two steps. When the workers
advance at their own pace, how far ahead may a leader run and still
leave the blocks it fetched in the cache for the others?
Theorem~\ref{thm:coherence} gives a sharp capacity threshold. Once a
worker has fallen behind, does hitting on the blocks the leaders fetched
let it catch up? Under the modeled hidden-latency condition,
Theorem~\ref{thm:feedback} shows that hits cannot speed it up. 
Appendix~\ref{app:notation} lists the notation, and Appendix~\ref{app:sec3} proves all lemmas and theorems.
\subsection{Model}
\label{sec:model}

Let $\ZN=\{0,\dots,N-1\}$ index the data blocks (e.g. tiles) of one memory access stream. The cache capacity $C$ is therefore measured in blocks. There are $M$ concurrent workers (e.g. GPU thread blocks), and worker $w$ carries a \emph{phase} $s_w\in\ZN$. Time advances in \emph{rounds}; in round $k$ worker $w$ accesses block:
\begin{equation}
a_w(k) \;=\; (k+s_w) \bmod N .
\label{eq:trace}
\end{equation}
A worker with a larger phase is \emph{ahead}: if $g=(s_{w'}-s_w)\bmod N$ is small
and positive, worker $w'$ touched worker $w$'s current block $g$ rounds ago, so
$w'$ is a \emph{leader} of $w$ at lag $g$. The global trace is the round-by-round
interleaving of the $M$ workers; within a round we fix an arbitrary but
consistent order and index workers accordingly. 

Let $y_1<\cdots<y_D$ be the distinct phase values, where multiple
workers may share a phase. Set $y_{D+1}=y_1+N$, and define the cyclic
gap vector $\boldsymbol g=(g_1,\ldots,g_D)$ by
\[
    g_i=y_{i+1}-y_i,\qquad 1\le i\le D.
\]
These gaps are positive and satisfy $\sum_{i=1}^D g_i=N$.
The definition includes the single-phase case, for which $g_1=N$.
\begin{figure*}[t]
\centering
\begin{tikzpicture}[>=Stealth, font=\small, scale=0.78, every node/.style={transform shape}]

  \node[anchor=south west, font=\normalsize\bfseries] at (-2.6, 4.7) 
    {(a) Address Ring $\mathbb{Z}_N$ (Lemma~\ref{lem:footprint})};

  \begin{scope}[shift={(0, 1.9)}]
    \def\R{1.8}         
    \def\Rmid{2.15}     
    \def\Rout{2.65}     

    \draw[thick, fill=gray!5] (0,0) circle (\R);
    \draw[dashed, thin, color=gray!40] (0,0) circle (\Rout);

    \coordinate (Y1) at (90:\R);
    \coordinate (Y2) at (15:\R);
    \coordinate (Y3) at (280:\R);
    \coordinate (Y4) at (180:\R);

    \draw[draw=none, fill=gray!15] (0,0) -- (90:\R) arc (90:15:\R) -- cycle;

    \draw[draw=none, pattern=north east lines, pattern color=black!60] 
      (0,0) -- (90:\R) arc (90:40:\R) -- cycle;

    \draw[very thick] (90:\R) arc (90:15:\R);
    \draw[densely dashed, thick] (90:\R) arc (90:40:\R);

    \draw[thick] (0,0) -- (90:\R);
    \draw[thick] (0,0) -- (15:\R);
    \draw[thin, dashdotted] (0,0) -- (280:\R);
    \draw[thin, dashdotted] (0,0) -- (180:\R);

    \filldraw[fill=black] (Y1) circle (2.2pt) node[above=2pt, font=\small] {$y_i$};
    \filldraw[fill=black] (Y2) circle (2.2pt) node[right=2pt, font=\small] {$y_{i+1}$};
    \filldraw[fill=gray!60, draw=black] (Y3) circle (2pt) node[below=2pt, font=\small] {$y_{i+2}$};
    \filldraw[fill=black] (Y4) circle (2.2pt) node[left=2pt, font=\small] {$y_{i-1}$};

    \draw[->, thick] (160:1.15) arc (160:125:1.15) 
      node[midway, above left=-1pt, font=\scriptsize] {Scan $(+1)$};

    \draw[<->, thick] (80:\Rmid) arc (80:25:\Rmid);
    \node[font=\small] at (52:\Rmid+0.25) {$g_i$};

    \draw[<->, thin] (172:\Rmid) arc (172:98:\Rmid);
    \node[font=\small, text=black!80] at (135:\Rmid+0.22) {$g_{i-1}$};

    \node[font=\footnotesize, text=gray!70] at (-35:\Rmid) {$g_{i+1}$};

    \node[font=\scriptsize, inner sep=1pt] at (55:0.95) {$\min(g_i, t)$};

    \draw[->, thick, dashed] (180:\Rout) arc (180:55:\Rout);
    \node[font=\scriptsize, anchor=south east] at (165:\Rout+0.05) {Arc $A_{y_{i-1}, t}$};
    \draw[thick] (55:\Rout-0.12) -- (55:\Rout+0.12);
    \node[anchor=south west, font=\scriptsize, inner sep=1pt] at (55:\Rout+0.05) {$\le y_i+t-1$};

    \node[font=\footnotesize, text=gray!60] at (0,-0.25) {$\mathbb{Z}_N$};
  \end{scope}

  \draw[thin, color=gray!30] (3.1, 0.2) -- (3.1, 4.6);

  \node[anchor=south west, font=\normalsize\bfseries] at (4.0, 4.7) 
    {(b) Water-Filling Root $T^*$ (Theorems~\ref{thm:ttl} \& \ref{thm:lru})};

  \begin{scope}[shift={(4.4, 0.2)}]
    \draw[->, thick] (0,0) -- (6.0, 0) node[right=2pt, font=\small] {Sorted Gaps $g_{(i)}$};
    \draw[->, thick] (0,0) -- (0, 3.8);
    \node[rotate=90, font=\small, anchor=south] at (-0.55, 1.9) {Blocks / Window $t$};

    \def\W{0.7}
    \def\gA{0.6}
    \def\gB{1.1}
    \def\gC{1.6}
    \def\gD{2.7}
    \def\gE{3.3}
    \def\Tstar{2.0}

    \fill[gray!15, draw=none] (0, \Tstar-0.4) rectangle (5.4, \Tstar+0.4);
    \draw[dashed, thin, color=gray!70] (0, \Tstar-0.4) -- (5.4, \Tstar-0.4) 
      node[right=2pt, font=\tiny, text=black!70] {$\Ts-1$ (Hit)};
    \draw[dashed, thin, color=gray!70] (0, \Tstar+0.4) -- (5.4, \Tstar+0.4) 
      node[right=2pt, font=\tiny, text=black!70] {$\Ts+1$ (Miss)};


    \draw[very thick, black] (0, \Tstar) -- (5.4, \Tstar) 
      node[right=2pt, font=\footnotesize\bfseries] {$\Ts = \lfloor\theta\rfloor$};

    \fill[pattern=north east lines, pattern color=black!60] (0.3, 0) rectangle (0.3+\W, \gA);
    \draw[thick] (0.3, 0) rectangle (0.3+\W, \gA);
    \node[below=2pt, font=\scriptsize] at (0.3+0.5*\W, 0) {$g_{(1)}$};

    \fill[pattern=north east lines, pattern color=black!60] (1.2, 0) rectangle (1.2+\W, \gB);
    \draw[thick] (1.2, 0) rectangle (1.2+\W, \gB);
    \node[below=2pt, font=\scriptsize] at (1.2+0.5*\W, 0) {$g_{(2)}$};

    \fill[pattern=north east lines, pattern color=black!60] (2.1, 0) rectangle (2.1+\W, \gC);
    \draw[thick] (2.1, 0) rectangle (2.1+\W, \gC);
    \node[below=2pt, font=\scriptsize] at (2.1+0.5*\W, 0) {$g_{(3)}$};

    \fill[pattern=north east lines, pattern color=black!60] (3.1, 0) rectangle (3.1+\W, \Tstar);
    \fill[gray!30] (3.1, \Tstar) rectangle (3.1+\W, \gD);
    \draw[thick] (3.1, 0) rectangle (3.1+\W, \gD);
    \draw[thick, line width=1.1pt] (3.1, \Tstar) rectangle (3.1+\W, \gD);
    \node[below=2pt, font=\scriptsize] at (3.1+0.5*\W, 0) {$g_{(4)}$};
    \node[font=\footnotesize\bfseries] at (3.1+0.5*\W, \Tstar+0.35) {Miss};

    \fill[pattern=north east lines, pattern color=black!60] (4.0, 0) rectangle (4.0+\W, \Tstar);
    \fill[gray!30] (4.0, \Tstar) rectangle (4.0+\W, \gE);
    \draw[thick] (4.0, 0) rectangle (4.0+\W, \gE);
    \draw[thick, line width=1.1pt] (4.0, \Tstar) rectangle (4.0+\W, \gE);
    \node[below=2pt, font=\scriptsize] at (4.0+0.5*\W, 0) {$g_{(5)}$};
    \node[font=\footnotesize\bfseries] at (4.0+0.5*\W, \Tstar+0.65) {Miss};

    \draw[decorate, decoration={brace, amplitude=4pt}, thick] (3.1, 3.5) -- (4.0+\W, 3.5);
    \node[above=3pt, font=\scriptsize\bfseries] at (3.9, 3.5) {$K(\Ts) = 2$ Misses};

    \node[anchor=north west, font=\scriptsize, fill=white, draw=gray!40, rounded corners=2pt, inner sep=2.5pt] at (0.1, 3.7) {
      \begin{tabular}{@{}ll@{}}
        \tikz{\fill[pattern=north east lines, pattern color=black!60] (0,0) rectangle (0.22, 0.11); \draw (0,0) rectangle (0.22, 0.11);} & $U(\Ts) \le C$ \\
        \tikz{\fill[gray!30] (0,0) rectangle (0.22, 0.11); \draw[thick] (0,0) rectangle (0.22, 0.11);} & Uncached gap $> \Ts$\\
        \tikz{\fill[gray!15] (0,0) rectangle (0.22, 0.11); \draw[dashed] (0,0.05) -- (0.22,0.05);} & LRU ambiguity band
      \end{tabular}
    };
  \end{scope}

\end{tikzpicture}
\caption{Address ring and water-filling. (a) The phases partition the
ring into territories of lengths $g_i$; a window of $t$ rounds covers
$\min(g_i,t)$ blocks of each (Lemma~\ref{lem:footprint}). (b) The sorted
gaps filled to capacity $C$ give the waterline $\theta$ and window
$\Ts=\lfloor\theta\rfloor$; each gap above it costs one miss per round
(Theorem~\ref{thm:ttl}). In the band $[\Ts-1,\Ts+1]$ LRU may differ
from TTL (Theorem~\ref{thm:lru}).}
\label{fig:static_geometry}
\end{figure*}
Figure~\ref{fig:static_geometry}(a) shows the phases on the address ring,
with the gaps as the arcs between consecutive distinct phases. The results
below depend on the phases only through the gaps: where the workers stand
on the ring does not matter, only how far apart they are. 
\subsection{Exact traffic for fixed phases: the gap law}
\label{sec:ttl}
Fix the phases. How many blocks must each round fetch from memory? We
write this count as $K$, the misses per round, so that $K/M$ is the miss
rate per request. The answer has a water-filling form
(Figure~\ref{fig:static_geometry}(b)): sort the gaps, raise a common
waterline until exactly $C$ blocks lie beneath it, and count the gaps
that rise above it. That number is $K$. Lemma~\ref{lem:footprint} and
Theorem~\ref{thm:ttl} make this exact.

The footprint $U(t)$ is the number of distinct blocks in any $t$
consecutive complete rounds. It is independent of the starting round,
because advancing all phases merely rotates the address ring.

\begin{lemma}[Cyclic-gap footprint]
\label{lem:footprint}
For every integer $t\ge0$,
\begin{equation}
 U(t)=\sum_{i=1}^{D}\min(g_i,t),\qquad
 K(t):=U(t+1)-U(t)=\#\{i:g_i>t\}.
 \label{eq:footprint}
\end{equation}
\end{lemma}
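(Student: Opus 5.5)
The plan is to count the footprint directly by decomposing the ring $\ZN$ into the arcs between consecutive distinct phases.

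Fix a starting round $k_0$. By the rotation invariance noted before the lemma, we may take $k_0=0$. Over rounds $0,\dots,t-1$, a worker with phase $y_i$ touches the arc
\[
A_{y_i,t}=\{y_i,y_i+1,\dots,y_i+t-1\}\bmod N .
\]
Workers sharing a phase touch identical arcs, so
\[
U(t)=\Bigl|\bigcup_{i=1}^D A_{y_i,t}\Bigr|.
\]
Next, partition $\ZN$ into the $D$ territories $I_i=\{y_i,\dots,y_{i+1}-1\}$ (indices taken cyclically, using $y_{D+1}=y_1+N$), with $|I_i|=g_i$.

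The key claim is
\[
\Bigl(\bigcup_j A_{y_j,t}\Bigr)\cap I_i=\{y_i,\dots,y_i+\min(g_i,t)-1\}.
\]
The inclusion ``$\supseteq$'' is immediate from $A_{y_i,t}$ itself. For ``$\subseteq$'', take a block $b=y_i+r$ with $0\le r<g_i$, and suppose $b\in A_{y_j,t}$. Then $b$ is reached from $y_j$ after $d=(b-y_j)\bmod N\le t-1$ steps. Walking backward from $b$, the first phase encountered is $y_i$, at distance $r$, and every other phase lies at distance strictly greater than $r$. In particular $d\ge r$, so $r\le t-1$. That gives $r<\min(g_i,t)$.

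Summing over the disjoint territories gives $U(t)=\sum_i\min(g_i,t)$. The case $t\ge N$ is consistent: every term becomes $g_i$ and the sum is $N$. The difference formula then follows termwise, because $\min(g_i,t+1)-\min(g_i,t)$ equals $1$ if $g_i>t$ and $0$ otherwise.

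The only delicate point is the ``$\subseteq$'' direction: showing that a leader's window cannot spill into a later territory beyond what that territory's own phase covers. This is handled by the nearest-preceding-phase argument above. Care is needed with the modular distances and with coincident phases, but the latter are collapsed by working with distinct values $y_i$. The degenerate case $D=1$, where $g_1=N$, needs no separate treatment.
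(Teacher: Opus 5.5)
Your proof is correct and takes essentially the same route as the paper. Both partition the ring into the territories $[y_i,y_{i+1})$, use that $y_i$ is the nearest preceding phase to show a window covers exactly the first $\min(g_i,t)$ blocks of each territory, then sum and take the termwise finite difference. Your two-inclusion argument with backward modular distances just spells out the step the paper states in one line.
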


Extend $U$ to real $t\ge0$ by $U(t)=\sum_i\min(g_i,t)$.
Since $C<N$, the equation $U(\theta)=C$ has a unique solution $\theta$, the
waterline of Figure~\ref{fig:static_geometry}(b).
Sorting the gaps $g_{(1)}\le\cdots\le g_{(D)}$ gives
\begin{equation}
 \theta=\frac{C-\sum_{i=1}^{j}g_{(i)}}{D-j},\qquad
 g_{(j)}\le\theta<g_{(j+1)},\qquad \Ts=\lfloor\theta\rfloor.
 \label{eq:waterfill}
\end{equation}
Here $0\le j<D$, $g_{(0)}=0$, and an empty sum is zero.
Equivalently, $T^*$ is the largest integer round window whose
footprint fits within the boundary capacity:
\[
    U(T^*)\le C<U(T^*+1).
\]
Sorting and a prefix scan cost $O(M\log M)$, independent of trace length.

We use a precisely slotted TTL abstraction, the object Denning called the working-set policy~\cite{denning1} and the caching literature calls a TTL cache~\cite{berger2014}: at the start of a round,
the cache remembers the preceding $T$ complete rounds; current-round
duplicates share one lookup/fill. At the next boundary it retains the
latest $T$ complete rounds. Thus boundary occupancy is $U(T)$.
Temporary current-round fills are not part of this boundary capacity
contract. A request-level TTL may need additional within-round storage;
it is not identified with a strict capacity-$C$ hardware cache.

\begin{theorem}[Exact slotted miss count]
\label{thm:ttl}
In steady state, the $\Ts$-window cache has boundary occupancy at most
$C$ and exactly
\begin{equation}
 F_C(\boldsymbol g)=K(\Ts)=\#\{i:g_i>\Ts\},\qquad
 \rho_{\rm TTL}=F_C(\boldsymbol g)/M
 \label{eq:ttl}
\end{equation}
misses per round and misses per request, respectively. Current-round fills lie outside this boundary capacity; Theorem~\ref{thm:lru} ties the count to capacity-$C$ LRU.
\end{theorem}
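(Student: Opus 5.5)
The proof combines the footprint identity of Lemma~\ref{lem:footprint} with the defining inequality $U(\Ts)\le C<U(\Ts+1)$ of the waterline.

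\textbf{Step 1: occupancy.} In steady state, fix a round boundary, say before round $k$. By the definition of the slotted $\Ts$-window cache, the boundary contents are exactly the distinct blocks touched in the preceding $\Ts$ complete rounds $k-\Ts,\dots,k-1$. By Lemma~\ref{lem:footprint}, and rotation invariance of the ring, that set has size $U(\Ts)$. By the choice $\Ts=\lfloor\theta\rfloor$ and monotonicity of $U$, we have $U(\Ts)\le U(\theta)=C$, so the occupancy bound holds. The same holds if $\theta$ is an integer, since then $\Ts=\theta$.

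\textbf{Step 2: miss count.} A block requested in round $k$ produces one fill if and only if it lies outside the window set. Current-round duplicates share a single lookup/fill, so the fills in round $k$ are exactly the distinct blocks of round $k$ that are absent from rounds $k-\Ts,\dots,k-1$. Counting these gives
\[
\#\{\text{distinct blocks in rounds }k-\Ts..k\}-\#\{\text{distinct blocks in rounds }k-\Ts..k-1\}=U(\Ts+1)-U(\Ts).
\]
By Lemma~\ref{lem:footprint}, this difference equals $K(\Ts)=\#\{i:g_i>\Ts\}$. As a direct check, a worker at distinct phase $y_i$ requests block $y_i+k$. The most recent earlier touch of that block comes from the nearest phase behind it, at lag $g_{i-1}$. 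So the block is a hit exactly when $g_{i-1}\le\Ts$, which reindexes to the same count. Workers sharing a phase are collapsed into one fill, which is why the count runs over distinct phases $D$ and not over $M$.

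\textbf{Step 3: steady state and rate.} "Steady state" means $k\ge\Ts$, so that every round in the window has actually occurred. The argument above then holds for every such $k$, and the per-round count is constant. Dividing by the $M$ requests per round gives $\rho_{\rm TTL}=F_C(\boldsymbol g)/M$.

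The main obstacle is the subtraction in Step 2. Its validity depends on two facts: boundary storage is defined as the window set alone, and current-round fills are excluded from capacity, as stipulated in the slotted TTL abstraction. These two facts make the fill set exactly the set difference of the two footprints. I would state this carefully and note explicitly that within-round temporaries are not counted, deferring any comparison with a strict capacity-$C$ LRU cache to Theorem~\ref{thm:lru}.
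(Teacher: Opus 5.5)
Your proof is correct, but Step~2 takes a slightly different route from the paper. You get the per-round fill count as the finite difference $U(\Ts+1)-U(\Ts)$, the number of round-$k$ blocks outside the window set. That makes the theorem an immediate corollary of the second identity in Lemma~\ref{lem:footprint}.

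The paper instead argues phase by phase:
\begin{itemize}
\item only the first current-round request at each distinct phase can miss;
\item its block was last requested by the next phase $y_{i+1}$ exactly $g_i$ rounds earlier, or $N$ rounds earlier by itself when $D=1$;
\item that block is retained if and only if $g_i\le\Ts$.
\end{itemize}
The occupancy step is the same in both proofs: $U(\Ts)\le C$.

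Your route is shorter and avoids lag bookkeeping. The paper's route also says \emph{which} phases miss. That per-phase classification is what Theorem~\ref{thm:lru} refines when it says only phases with $g_i\in\{\Ts,\Ts+1\}$ can change status.

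Your ``direct check'' has the direction reversed. Block $y_i+k$ was last touched by the phase \emph{ahead}, $y_{i+1}$, at lag $g_i=y_{i+1}-y_i$. This matches the paper's definition of a leader. The phase behind, $y_{i-1}$, reaches that block only $g_{i-1}$ rounds \emph{later}. Cyclic reindexing gives the same total, so the theorem as you prove it is unaffected. However, the assignment of misses to individual phases is wrong as written, and that assignment is what carries over to the LRU comparison.
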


Since $C<N$, at least one gap exceeds $\Ts$. Therefore, the best TTL count is
$K=1$, attained by \emph{aligned} workers (one phase) and by tight
groups of \emph{separated} workers (distinct phases). If all gaps exceed $C/D$, then
$\theta=C/D$ and $K=D$, and each distinct phase pays its own fill.
Section~\ref{sec:dynamic} takes $K=1$ as its anchor and predicts accumulated
traffic directly, because separated workers can still reach $K=1$.
\subsection{LRU differs from TTL only at the waterline}
Hardware caches are not slotted TTL caches. The next theorem shows that
for fully associative LRU the gap law stays exact except for gaps within
one round of the waterline, so the closed form carries over to LRU
without a trace.
\begin{theorem}[A one-round LRU certificate]
\label{thm:lru}
Assume $D\le C$. For fully associative LRU in steady state,
\begin{equation}
 K(\Ts+1)\le K_{\rm LRU}\le K(\Ts-1).
 \label{eq:lru}
\end{equation}
Only phases with preceding reuse lag $g_i\in\{\Ts,\Ts+1\}$ can disagree
with the TTL classification.
\end{theorem}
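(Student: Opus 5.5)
We prove the sandwich by studying, for each access, the LRU stack distance it sees. Consider the request of a worker at phase $y_{i+1}$ for block $b$ in round $k$. The most recent earlier touch of $b$ was by a worker at phase $y_i$, exactly $g_i$ rounds earlier; workers sharing a phase within a round share one fill. A request at the head of a phase group hits in fully associative LRU iff fewer than $C$ distinct other blocks were referenced since that previous touch. We will therefore sandwich this count using Lemma~\ref{lem:footprint}.

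\textbf{Step 1 (the sandwich).} The interval between the previous touch and the current request contains the $g_i-1$ complete rounds strictly in between, plus a partial round on each side. The rotation invariance used for $U$ makes the count independent of $k$. Let $R_i$ be the number of distinct blocks other than $b$ referenced in this interval. Monotonicity of set inclusion gives
\[
U(g_i-1)-1\;\le\;R_i\;\le\;U(g_i+1).
\]
We will tighten the constants by checking precisely which blocks of the two partial rounds can be new. This bookkeeping needs care, because the slack must stay within one round on each side.

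\textbf{Step 2 (classify the phases).} By the equivalent characterization $U(\Ts)\le C<U(\Ts+1)$, together with the monotonicity of $U$, we get two cases.
\begin{itemize}
\item If $g_i\le \Ts-1$, then $R_i\le U(g_i+1)\le U(\Ts)\le C$. After refining Step 1 to a strict bound, this gives a hit.
\item If $g_i\ge \Ts+2$, then $R_i\ge U(g_i-1)-1\ge U(\Ts+1)-1\ge C$, which gives a miss.
\end{itemize}
Only $g_i\in\{\Ts,\Ts+1\}$ remain ambiguous, which is the second sentence of the theorem. Then
\[
K_{\rm LRU}\ \ge\ \#\{i:g_i\ge \Ts+2\}=K(\Ts+1),
\qquad
K_{\rm LRU}\ \le\ \#\{i:g_i\ge \Ts\}=K(\Ts-1).
\]
The hypothesis $D\le C$ guarantees $\Ts\ge1$, since $U(1)=D\le C$. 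It also guarantees that the cache can hold at least one block per phase, so $K(\Ts-1)$ is well defined and the head-of-group reasoning applies.

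\textbf{Step 3 (steady state).} In steady state every block referenced since the previous touch is counted as "referenced", whether it hit or missed. Hence LRU stack distance depends only on the reference string and not on hit/miss history. This is exactly why LRU, unlike the TTL policy, needs no further fixed-point argument.

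\textbf{Main obstacle.} The main obstacle is the off-by-one accounting in Step 1. The partial rounds at both ends depend on the within-round worker order. The block currently requested, and same-phase duplicates, must not be double counted. Strict versus non-strict inequality at capacity $C$ decides whether the boundary cases $g_i=\Ts-1$ and $g_i=\Ts+2$ land safely. I would first write the interval as the union of the complete rounds $k-g_i+1,\dots,k-1$ with a suffix of round $k-g_i$ and a prefix of round $k$. I would then show that each partial piece contributes at most the set of blocks touched by one additional complete round, which gives the tight bounds with exactly one round of slack per side.
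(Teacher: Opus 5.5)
Your route is the paper's: bound the LRU reuse distance between the footprints of $g_i-1$ and $g_i+1$ complete rounds, then use Mattson's criterion (hit iff fewer than $C$ distinct others intervene) together with $U(\Ts)\le C<U(\Ts+1)$. There is one gap, in the hit case. As stated, your bound $R_i\le U(g_i+1)\le U(\Ts)\le C$ does not give a hit, because a hit needs $R_i\le C-1$. Your Main obstacle plan would not close this either: showing the interval lies inside rounds $k-g_i,\dots,k$ again gives only $U(g_i+1)$.

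The missing unit comes from the block $b$ itself, not from any partial-round bookkeeping. The $U(g_i+1)$ distinct blocks of those $g_i+1$ rounds include $b$, which $R_i$ excludes by definition, so $R_i\le U(g_i+1)-1$. For $g_i\le\Ts-1$ this gives $R_i\le U(\Ts)-1\le C-1$, a hit. Symmetrically, your $-1$ in the lower bound is unnecessary. The earlier touch is the most recent one, so $b$ does not occur in the $g_i-1$ complete rounds strictly between, and $R_i\ge U(g_i-1)$. Your looser lower bound still suffices for the miss case.

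Two smaller points. First, $D\le C$ is not needed to hold one block per phase or to make $K(\Ts-1)$ meaningful. Its role is to make non-head requests hit, because under LRU they are separate lookups, not a shared fill. Between the first request to a block in round $k$ and a same-phase duplicate, only requests of round $k$ intervene. That round touches exactly $D$ distinct blocks, so at most $D-1\le C-1$ distinct others intervene, and the duplicate hits. This is what lets $K_{\rm LRU}$ count only head requests; the paper's proof states it explicitly.

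Second, with $a_w(k)=(k+s_w)\bmod N$ the leader is the larger phase. The request at phase $y_i$ was last touched by phase $y_{i+1}$, $g_i$ rounds earlier, not the reverse. This relabeling leaves the multiset of lags unchanged, so your final counts $K(\Ts+1)\le K_{\rm LRU}\le K(\Ts-1)$ are unaffected.
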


When its interval collapses, the inexpensive TTL formula also gives
exact fully associative LRU. Otherwise one can compute each worker's
exact $\mathrm{RD}$ by merging the at-most-$M$ cyclic address intervals
between its previous and current request, in $O(M^2\log M)$ total time.
This symbolic alternative needs no generated or measured address trace.
It is also the reuse profile given to the SDCM--FA baseline in
Section~\ref{sec:exp}.


\subsection{A policy-independent traffic lower bound}
\label{sec:policy}
LRU is only the most common policy. Could a smarter one avoid the traffic that separated phases create? Theorem~\ref{thm:policy} bounds every demand-paging policy’s traffic from below using only the phases.
Preserve within-round order by placing worker $w$'s round-$k$ request at time $k+w/M$. Let $\tau_w>0$
be its elapsed time since the preceding request to the same block.
In rounds, $\tau_w=\min_{w'}[g_{w'w}+(w-w')/M]$ with $g_{w'w}=(s_{w'}-s_w)\bmod N$,
a zero lag counting as $N$ when $w'\ge w$, so $\sum_w\tau_w=N$; two aligned workers give $(N-\tfrac12,\tfrac12)$.

\begin{theorem}[Policy-independent residency bound]
\label{thm:policy}
For any capacity-$C$ demand-paging policy without prefetching, including
offline Belady~\cite{belady1966}, the long-run miss count satisfies
\begin{equation}
 K_{\rm policy}\ge M-\Phi(C),\qquad
 \Phi(C)=\max_{\substack{0\le x_w\le1\\
                     \sum_w\tau_wx_w\le C}}\sum_w x_w .
 \label{eq:policy}
\end{equation}
For an $H$-round finite trace with arbitrary initial cache, the lower
bound has an additive correction of at most $N/H$.
\end{theorem}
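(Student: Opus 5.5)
We prove Theorem~\ref{thm:policy} with a residency (space--time) accounting argument over a long finite horizon, and then pass to the limit. Fix an $H$-round window of the periodic trace, containing $MH$ requests placed at the fractional times $k+w/M$. Every request that hits must have found its block resident since the preceding request to the same block. Under demand paging without prefetching, that block was brought in no later than the preceding request, or was present initially. Assign to each hit at request $(w,k)$ the half-open residency interval of length $\tau_w$ that ends at this request. For one block, consecutive requests give disjoint intervals, and different blocks occupy distinct cache slots. So at every instant, the intervals that cover that instant belong to distinct resident blocks, and there are at most $C$ of them.

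Integrating over $[0,H)$ gives the budget inequality
\[
\sum_{\text{hits }(w,k)}\tau_w\;\le\;CH+(\text{boundary term}).
\]
The boundary term collects the intervals that start before time $0$. Each block contributes at most one such interval, of length at most $N$ rounds, and only blocks present initially contribute. A cleaner way to bound it is to note that the at most $N$ distinct blocks each have at most one interval straddling time $0$, and that the lengths of these straddling intervals sum to at most $N$. This should follow because, at time $0$, the backward lags to the previous request of each block sum to at most the cycle length $N$. Indeed, the $\tau_w$ partition the ring: $\sum_w\tau_w=N$.

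Next, define $x_w$ as the fraction of rounds in the window in which worker $w$'s request hits. Then $0\le x_w\le1$, and dividing the budget inequality by $H$ gives
\[
\sum_w\tau_w x_w\le C+O(N/H).
\]
The hit count per round is $\sum_w x_w$. Since $\Phi$ is the value of a fractional-knapsack LP, it is concave and nondecreasing in $C$. Its slope is at most $1/\min_w\tau_w$, and the correction can be absorbed more directly as follows. A total budget excess of at most $N$ (not $N/H$ per round) lets at most that many extra hits be ``paid'' when they are counted against the true budget. This yields hits per round $\le\Phi(C)+N/H$, and therefore
\[
K_{\rm policy}\ge M-\Phi(C)-N/H.
\]
Letting $H\to\infty$ gives the steady-state bound. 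Belady is covered automatically, since the argument uses only the demand-paging property and capacity, not any eviction rule.

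The main obstacle is making the boundary correction exactly $N/H$ rather than a constant times it. In particular, I need to verify that straddling intervals contribute residency at most $N$ in total. I will try to prove this by pairing each straddling interval with the time-ring arc $[\,\text{previous request},0)$ of its block. These arcs are disjoint subsets of the length-$N$ cyclic time period, because the ring partition $\sum_w\tau_w=N$ means that each unit of time is ``owned'' by exactly one (block, reuse-lag) pair. I also need a careful check of the within-round ordering convention, namely aligned workers with lag $w/M$ and a zero lag counting as $N$, so that same-round duplicates are charged the tiny $\tau$ correctly.
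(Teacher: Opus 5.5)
Your space--time accounting is the same as the paper's: disjoint reuse intervals per block, at most $C$ of them covering any instant, then the fractional knapsack. It does give the long-run bound. There is at most one straddling interval per block, each of length at most $N$, and $\tau_w\ge 1/M$, so the surplus changes $\Phi$ by $O(MN^2/H)$, which vanishes as $H\to\infty$.

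The gap is the finite-horizon constant $N/H$, and both steps you propose for it fail.

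\textbf{First step: the boundary residency is not at most $N$.} The identity $\sum_w\tau_w=N$ concerns the $M$ intervals ending in one round. It says nothing about the backward lags of all blocks at a fixed instant. The arcs $[\text{previous request},0)$ are nested, not disjoint. Take $M=1$, $N=4$, $C=3$ and initial cache $\{0,1,2\}$. The first three requests hit, with intervals $[-4,0)$, $[-3,1)$, $[-2,2)$. Their pre-horizon parts sum to $9>N$, and in general this sum can be of order $CN$.

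\textbf{Second step: a surplus does not convert one-for-one into hits.} Even an exact surplus $B$ does not buy at most $B$ extra hits. Your inequality only gives hits per round at most $\Phi(C+B/H)$, and the slope of $\Phi$ at $C$ can be the reciprocal of a within-round lag. Take $M=8$, $N=10$, $C=1$, with workers alternating between phases $0$ and $5$. Six lags equal $1/4$, so $\Phi(1)=4$ with slope $4$. One initially cached block that straddles with pre-horizon residency $33/8$ then yields $4+16.5/H$, which is weaker than the claimed $4+N/H$.

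\textbf{The missing idea: do not charge those hits at all.} Exclude from the knapsack every hit whose preceding reference to the same block lies before the horizon. Only the first in-horizon request to each block can be such a hit, so there are at most $N$ of them. Every remaining hit has its full interval of length $\tau_w$ inside $[0,H)$, so those intervals have total length at most $CH$ with no boundary term. That gives at most $H\Phi(C)$ such hits, hence at most $H\Phi(C)+N$ hits overall, which is exactly $K_{\rm policy}\ge M-\Phi(C)-N/H$. This is the paper's proof. It also avoids charging residency before time $0$, which is not even defined for a finite trace with an arbitrary initial cache.
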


The program is fractional knapsack: sort $\tau_w$, retain all cheapest
intervals that fit, and fractionally fill the next one. It is a fractional interval-packing relaxation, akin to LP relaxations of offline and generalized caching~\cite{albers1999,barnoy2001}. It is an optimistic
relaxation, not a realizable policy. Unlike a bound based only on the
smallest gap, it accounts for the limited number of cheap reuse events.
Using exact $\tau_w$, rather than integer $g_i$, also avoids incorrectly
discarding request-order boundary effects. We evaluate it against Belady
in Section~\ref{sec:exp}.

For $N=8,M=2,C=2$ and phases $(0,4)$, Theorem~\ref{thm:policy} gives $K_{\rm policy}\ge10/7>1$, certifying excess traffic even against the aligned LRU count $K=1$.
Sections~\ref{sec:coherence} and~\ref{sec:feedback} bound how far the workers may separate before sharing is lost, and show why the cache does not pull them back together.
\subsection{How much lead the cache tolerates}
\label{sec:coherence}
The fixed-phase analysis determines traffic for given relative scan
positions. We now ask how far workers can fall behind one another without
losing shared cache fills. Let $M$ workers repeatedly read blocks
$0,1,\ldots,N-1$, all starting at block zero but advancing at their own
pace. For cache analysis, arrange their requests into a single event
sequence that preserves each worker's request order. The event index $u$
counts requests across all workers, not elapsed time or scan rounds.
Let $n_w(u)$ be the number of requests made by worker $w$ among the first
$u$ events, with $n_w(0)=0$. Its next request has cumulative index
$k=n_w(u)$ and accesses block $k\bmod N$. These counts do not reset
between scans, so a worker one full scan behind is $N$ requests behind,
even if both workers are at the same block. Over the first $E>0$ events,
define $\sigma_E$ as the largest lead of any worker over any other worker,
measured in requests:
\begin{equation}
 \sigma_E=\max_{0\le u\le E}
       \left(\max_w n_w(u)-\min_w n_w(u)\right).
 \label{eq:progress_spread}
\end{equation}
\begin{theorem}[A sharp capacity condition for preserving sharing]
\label{thm:coherence}
Let $M\ge2$ workers share an initially empty, fully associative LRU cache
of integer capacity $1\le C<N$, with no other accesses. If
$C\ge2\sigma_E-1$, every request to an index $k$ after its first
request by any worker hits. Consequently, the miss count $A_E$ satisfies
\begin{equation}
 A_E\le\max_w n_w(E),\qquad
 \frac{M A_E}{E}\le1+\frac{(M-1)\sigma_E}{E}.
 \label{eq:coherent_traffic}
\end{equation}
For every $\sigma\ge2$, a two-worker schedule with $\sigma_E=\sigma$ can lose
sharing at $C=2\sigma-2<N$; the threshold is sharp.
\end{theorem}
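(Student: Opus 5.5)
The plan is a stack-distance argument for fully associative LRU. A request to block $b$ at event $u$ hits if and only if fewer than $C$ distinct other blocks were accessed strictly between the most recent earlier access to $b$ and $u$. So it suffices to confine, by index, every request that can fall in that window.

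\emph{Step 1 (the window of indices).} Fix a request by worker $w$ to cumulative index $k$ at event $u$. Let $u_0<u$ be the event at which some worker first requested index $k$, and call that worker the leader. Just before $u_0$ the leader has count $k$, so every count is at least $k-\sigma_E$. Just after $u_0$ the leader has count $k+1$. Hence no worker can afterwards request index $k-\sigma_E$: that worker's count would be $k-\sigma_E$ while the leader's is at least $k+1$, a spread of $\sigma_E+1$. At event $u$ worker $w$ has count $k$, so no count exceeds $k+\sigma_E$. Thus any request made between $u_0$ and $u$ has an index in $[k-\sigma_E+1,\,k+\sigma_E-1]$: this uses monotonicity of the counts for requests made before $u$, and the bound at $u$ for the largest index.

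\emph{Step 2 (the hit).} The block $k\bmod N$ was last accessed at some event $u_1\ge u_0$, and the window $(u_1,u)$ is contained in $(u_0,u)$. Requests in $(u_0,u)$ touch at most $2\sigma_E-1$ indices, one of which is $k$ itself. Reducing modulo $N$ can only merge blocks, so at most $2\sigma_E-2$ other distinct blocks intervene. The hypothesis $C\ge 2\sigma_E-1$ gives $2\sigma_E-2\le C-1$, so the request hits.

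\emph{Step 3 (the traffic bound).} By Step 2, misses occur only at first requests of indices. Since counts never reset, the number of distinct indices ever requested is $\max_w n_w(E)$, which gives $A_E\le\max_w n_w(E)$. For the second inequality, write $E=\sum_w n_w(E)\ge \max_w n_w(E)+(M-1)(\max_w n_w(E)-\sigma_E)$. Rearranging gives $M\max_w n_w(E)\le E+(M-1)\sigma_E$; dividing by $E$ yields \eqref{eq:coherent_traffic}.

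\emph{Step 4 (sharpness).} The main difficulty is making Step 1 tight, which means realizing both extremes $k-\sigma+1$ and $k+\sigma-1$ inside one reuse window. Take two workers $A,B$ and $k=\sigma-1$. I would use the following schedule:
\begin{itemize}
\item $A$ requests $0,\dots,\sigma-1$, so its request of $k$ occurs while $n_B=0=k-\sigma+1$, with spread $\sigma$.
\item $B$ requests $0,\dots,k-1$.
\item $A$ requests $k+1,\dots,k+\sigma-1$, reaching count $k+\sigma$ while $n_B=k$, so the spread stays at most $\sigma$.
\item $B$ requests $k$.
\end{itemize}
Between $A$'s and $B$'s requests of $k$, the $2\sigma-2$ indices $k-\sigma+1,\dots,k-1,k+1,\dots,k+\sigma-1$ are touched. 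The hypothesis $C=2\sigma-2<N$ gives $N\ge 2\sigma-1$, so these indices occupy distinct residues modulo $N$, all different from $k\bmod N$. Therefore LRU with capacity $C=2\sigma-2$ has evicted $k$, and $B$'s request of $k$ misses after the first request, although $\sigma_E=\sigma$.

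I will check the spread at every prefix of this schedule explicitly. I expect that check, together with the boundary argument in Step 1 excluding index $k-\sigma_E$, to be the delicate points.
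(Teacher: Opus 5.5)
Your proof is correct and follows the paper's argument essentially step for step: the same index window $[k-\sigma_E+1,k+\sigma_E-1]$, the same appeal to the LRU stack-distance criterion, the same counting for \eqref{eq:coherent_traffic}, and the identical two-worker sharpness schedule. Two details are handled more explicitly than in the paper. You bound the window from the most recent access $u_1\ge u_0$, and you reduce indices modulo $N$ for any $N\ge 2\sigma-1$ rather than only the paper's choice $N=2\sigma$; both simply spell out points the paper leaves terse.
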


Informally, if no worker ever leads another by more than $\sigma$
requests, a capacity of $2\sigma-1$ blocks preserves the sharing
entirely, and a capacity of $2\sigma-2$ can already lose it. The proof needs $\sigma$ only within each reuse interval; other traffic there adds its distinct blocks to the required capacity.

Bounded progress differences thus preserve near-$K=1$ traffic over long
runs. The bound concerns the entire history: final phase alignment can
hide earlier separation or a whole-scan lag. It establishes when sharing
survives, but does not ensure that workers remain close.

\subsection{When hits cannot pull a slow worker back}
\label{sec:feedback}
One might expect divergence to correct itself: a worker that falls behind
finds its blocks already fetched by the leaders, hits more often, and
catches up. But sharing a fill and recovering a progress deficit are
different benefits.
Our TMA kernel primes $d-1$ copies into $d\ge2$ buffers, then repeatedly
\emph{waits for block $k$, consumes it, and issues block $k+d-1$}.
With two buffers, the next copy is issued only after the issuing warp's
current-block consumer instructions. Furthermore, the transaction barrier requires
both the data and all participating threads to arrive~\cite{cuda_async}:
a faster copy need not release a barrier held by a late warp.

For an indefinitely continued interior loop, put $r=d-1$ and
let $I_k$ be copy $k$'s issue time. Use fixed costs $\alpha\ge0$ from
the previous issue to the next wait, and $\beta>0$ from that wait's
release through consumption to the refill issue. Assume thread arrivals
add no later gate, and let $\lambda_k$ be copy latency including queuing.
The resulting earliest-event recurrence~\cite{baccelli1992} is
\begin{equation}
 I_{k+r}=\max\{I_{k+r-1}+\alpha,\ I_k+\lambda_k\}+\beta,
 \quad k\ge0,
 \label{eq:slot_timing}
\end{equation}
with finite initial issue times $I_0,\ldots,I_{r-1}$.

\begin{theorem}[A limit on cache-mediated catch-up]
\label{thm:feedback}
For constant latency $\lambda$, Equation~\eqref{eq:slot_timing} gives
\begin{equation}
 P_d(\lambda):=\lim_{n\to\infty}\frac{I_n}{n}
 =\max\left\{\alpha+\beta,\frac{\lambda+\beta}{d-1}\right\}.
 \label{eq:refill_cadence}
\end{equation}
More generally, allow arbitrary latency sequences between bounds
$0\le\lambda_{\rm hit}\le\lambda_k\le\lambda_{\rm miss}$. For two workers, let $P_w$
denote Equation~\eqref{eq:refill_cadence} with worker $w$'s costs and depth,
and give their latency bounds subscript $w$.
If $\phi=P_2(\lambda_{{\rm hit},2})-P_1(\lambda_{{\rm miss},1})>0$, then
\begin{equation}
 I_{2,n}-I_{1,n}\ge n\phi-O(1).
 \label{eq:catchup_limit}
\end{equation}
Even the fastest permitted responses for worker 2 and the slowest for
worker 1 cannot repair this growing timing deficit.
\end{theorem}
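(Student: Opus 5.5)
The plan is to treat the recurrence \eqref{eq:slot_timing} as a max-plus linear system and argue by monotonicity and comparison.

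\textbf{Constant latency.} With constant $\lambda$, rewrite the recurrence as $I_{k+r}=\max\{I_{k+r-1}+\alpha+\beta,\ I_k+\lambda+\beta\}$. This is a max-plus linear recurrence of order $r$ whose companion matrix is irreducible, since the path $k+r-1\to k+r$ is always present. Its cycle mean is the maximum over elementary circuits of the average weight per step. In the precedence graph the relevant circuits give two candidates. Repeatedly taking the first branch yields weight $\alpha+\beta$ per unit index. Taking the second branch advances the index by $r$ for weight $\lambda+\beta$, giving $(\lambda+\beta)/r$. Any mixture of the two has average at most the larger of these.

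I would avoid invoking the max-plus cycle-time theorem abstractly and instead prove the limit directly.
\begin{itemize}
\item \emph{Lower bound.} Induction gives $I_{k+r}\ge I_{k+r-1}+\alpha+\beta$ and $I_{k+r}\ge I_k+\lambda+\beta$. Iterating each inequality separately gives $I_n\ge n\max\{\alpha+\beta,(\lambda+\beta)/r\}-O(1)$.
\item \emph{Upper bound.} Set $P=\max\{\cdot\}$ and $c=\max_{j<r}(I_j-jP)$. I claim $I_n\le nP+c$ by strong induction. Both branches preserve the claim, because $(n-1)P+\alpha+\beta\le nP$ and $(n-r)P+\lambda+\beta\le nP$.
\end{itemize}
Dividing by $n$ gives \eqref{eq:refill_cadence}.

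\textbf{Variable latency.} The right-hand side of \eqref{eq:slot_timing} is monotone nondecreasing in both the earlier issue times and $\lambda_k$. A comparison lemma therefore holds. Let $J$ solve the recurrence with constant latency $\lambda_{\rm hit}$ and the same initial values. Let $L$ solve it with constant latency $\lambda_{\rm miss}$. Then $J_n\le I_n\le L_n$ for all $n$, which follows by induction on $n$ using monotonicity of $\max$ and $+$.

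Apply this per worker. For worker 2, the first part's lower bound gives $I_{2,n}\ge J_{2,n}\ge nP_2(\lambda_{{\rm hit},2})-O(1)$. For worker 1, the upper bound gives $I_{1,n}\le L_{1,n}\le nP_1(\lambda_{{\rm miss},1})+O(1)$. Subtracting yields \eqref{eq:catchup_limit}, with $O(1)$ depending only on initial issue times, costs and depths.

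The point to stress is that cache hits enter only through $\lambda_k$. Since $\lambda_k\ge\lambda_{\rm hit}$, no hit pattern can push worker 2 below its hit-rate cadence.

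\textbf{Main obstacle.} The main obstacle is making the constants explicit and uniform: the $O(1)$ terms must not depend on the latency sequence. The lower bound needs care because iterating only the second branch reaches index $n$ from some initial index $j<r$. I would handle this by taking the minimum over initial values, $\min_j I_j - jP$, as the additive constant.

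A second subtlety is the hypothesis that thread arrivals add no later gate. That hypothesis is exactly what licenses \eqref{eq:slot_timing} as an equality rather than an inequality. I would note that an additional gate only increases worker 2's times. Such a gate would break the upper bound for worker 1, but that bound is covered by assumption. So the deficit bound survives even then, as long as the gate applies to the lagging worker.
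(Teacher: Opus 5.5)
Your proposal is correct and takes essentially the same route as the paper. The lower bound comes from iterating the one-index edge of weight $\alpha+\beta$ and the $r$-index edge of weight $\lambda+\beta$; the upper bound $I_n\le nP+\max_{0\le j<r}(I_j-jP)$ comes by induction; and monotonicity sandwiches each variable-latency clock between its constant-$\lambda_{\rm hit}$ and constant-$\lambda_{\rm miss}$ clocks. Your explicit lower-bound constant $\min_{0\le j<r}(I_j-jP)$ and the max-plus cycle-mean motivation only make explicit what the paper leaves implicit.
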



Taking $\lambda_{\rm hit},\lambda_{\rm miss}$ as hit/miss latencies, define the available
overlap $\Theta=(d-1)(\alpha+\beta)-\beta$ and $[v]_+=\max(v,0)$.
The greatest asymptotic time saving per request from replacing every
miss by a hit is
\begin{equation}
 G_d=P_d(\lambda_{\rm miss})-P_d(\lambda_{\rm hit})
 =\frac{[\lambda_{\rm miss}-\Theta]_+-[\lambda_{\rm hit}-\Theta]_+}{d-1}.
 \label{eq:catchup_gain}
\end{equation}
When misses fit within $\Theta$, $G_d=0$, and any number of hits gives at most a bounded startup advantage.
Losing this correction alone does not create divergence;
Equation~\eqref{eq:catchup_limit} also requires a sustained service imbalance.
For $d=2$, $\Theta=\alpha$, so pre-refill consumer work provides no
overlap; every additional buffer adds one $\alpha+\beta$ interval.
Increasing only miss latency cannot decrease $G_d$. Therefore, explaining
weaker correction under pressure requires more than ``misses get slower'';
non-memory service and inter-warp arrival delays must also be examined.

A measured reference depth sweep makes this distinction concrete. At
$(N,M,C)=(4096,48,1280)$, 16\,KiB blocks and no added compute,
$d=2,3,4$ give first-wave times of 841, 465, and 424\,ns per normalized
round, all with $K\simeq1$; the fourth stage saves only 9\%, so by then
the copy latency is largely hidden and $G_d$ is near zero. Over 32
waves, $d\le3$ stays at $K=1.00$ while $d=4$ rises to 5.58 at an
unchanged 425\,ns per round, and eight WMMA
operations per block at $d=4$ keep $K=1.00$. One pipeline stage past the
point where latency is hidden thus trades a small time saving for a
fivefold traffic increase, and compute per block undoes it. Traffic and
timing come from separate runs, so this contrast shows that they
decouple and gives no estimate of $G_d$.

Section~\ref{sec:dynamic} predicts in exactly the variables that appear
here: prefetch depth ($r=d-1$ in the recurrence), the load path (only
TMA has the barrier gate above, so $\alpha$ and $\beta$ differ by path),
compute per block (the non-memory service in $\alpha$ and $\beta$,
coupled to $d$ through $\Theta$), and occupancy (the number of resident
workers $M$ in Theorem~\ref{thm:coherence} and in $\rho=K/M$). 
Finite-run phase evolution remains unresolved by these theorems; Section~\ref{sec:dynamic} instead calibrates the resulting fill traffic at reference configurations.
In a separate $W=32$ GB10 marker diagnostic, sampled within-batch lead stays below 110 blocks for all 96 depth-three CTA batches, while 94 of 96 depth-four batches cross the 640-block scale in Theorem~\ref{thm:coherence} at $C=1280$.
On marked $d=3/4$ runs, snapshot gap-law $K$ is $1.00/6.15$ versus whole-run fill $K=1.00/5.52$.
At $d=4,W=32$ on GB10, a 512-block within-wave checkpoint cuts measured $K$ from 5.40 to 1.00 (5.58, 5.52, and 5.40 are from separate runs) with 2.9\% plain kernel-time overhead, demonstrating how restricting lead at Theorem~\ref{thm:coherence}'s scale can restore sharing.
\section{Dynamic Miss-Rate Prediction}

\label{sec:dynamic}
Section~\ref{sec:static} therefore separates the prediction problem into two parts: relative progress determines a sharing geometry, and the sharing geometry determines traffic. The second mapping is characterized analytically; the first is history-dependent and cannot in general be reconstructed from static code or a terminal phase vector. PASCAL therefore calibrates the accumulated effect of the first mapping while retaining the analytical aligned-traffic anchor and structural coordinates identified above.
It also supplies what a
predictor can use before execution: the coordinates $(o,\pi,d,m)$ that
control relative progress, the aligned TTL reference $K = 1$, and the
normalization $\rho=K/M$. 
Theorem~\ref{thm:feedback} gives a conditional mechanism
for persistent divergence, while the aligned gap law supplies the traffic reference used below.
Divergence enters through the measured traffic of references. We measure the traffic of a reusable set of reference
configurations once per device and transfer it to an unmeasured
configuration along static program structure, in three steps: fit a
cumulative-traffic curve for each reference configuration
(Section~\ref{sec:curve}), encode the traffic relative to that reference
(Section~\ref{sec:excess}), and interpolate that encoding to the target
within its execution class (Section~\ref{sec:transfer});
Algorithm~\ref{alg:predict} lists these steps for one query.
Section~\ref{sec:protocol} states the calibration protocol that keeps
test configurations out of every fitting step, and
Section~\ref{sec:operand-sharing} extends the same steps to GEMM, whose
operands are only partially shared. Inference reads the frozen
calibration, the configuration, and the requested horizon; it uses no
timing, counters, or traces of the target. The result is a cache
component that a pre-execution performance model can call in place of
its uniform-advance reuse profile, which Section~\ref{sec:e2e} does
inside TileSight. Proofs and derivations are in
Appendix~\ref{app:sec4}.

\subsection{The prediction target}
\label{sec:target}
Each worker is a GPU thread block (CTA) scanning $N$ data blocks, and
$M$ workers can be resident concurrently. Let $W$ be the total number
of CTA scans divided by $M$, so $H=WN$ is work per resident-worker
equivalent; rounds need not be synchronized, and CTAs need not run at
equal speed. If $V(H)$ bytes are fetched into L2 from off-chip memory
during the run, define
\begin{equation}
 A(H)=V(H)/b,\qquad
 \overline K_H=A(H)/H,\qquad
 \rho_H=\overline K_H/M ,
 \label{eq:dynamic_target}
\end{equation}
where $b$ is bytes per data block. $A$ counts fill-equivalent blocks,
and $\overline K_H$ generalizes the static $K$ to a whole-run average
that includes startup and tail traffic. Below, $K$ abbreviates
$\overline K_H$ when the configuration and horizon are understood.

We predict this integral rather than a phase vector or a terminal
state, for two reasons. Under variable progress the reuse distance of
an access is the union of the cyclic intervals that each worker
requested in between; it can be computed from an observed history, whereas the current phase vector alone does not determine future interleaving. And a
run shorter than the growth scale of any saturation curve identifies
only one combination of that curve's amplitude and scale, never the
plateau itself (Appendix~\ref{app:identifiability}).

\subsection{From measurements to a traffic curve}
\label{sec:curve}
Write $m=m_{\rm comp}$ for compute work per block and $z=(o,\pi,d,m)$
for a configuration key, where $o$ is resident CTAs per streaming
multiprocessor, $\pi$ selects the synchronous or Tensor Memory
Accelerator (TMA) load path, and $d$ is prefetch depth ($d=0$ for synchronous
loads). A family $f=(o,\pi,d)$ varies only $m$. Blocks are $b=16$~KiB;
$C=1280$ and $M=48o$ on GB10, $C=2048$ and $M=20o$ on Thor.

\begin{algorithm}[t]
\caption{Predict the miss rate of an unmeasured configuration}
\label{alg:predict}
\begin{algorithmic}[1]
\Require target key $z=(o,\pi,d,m)$, scan length $N$, scan count $W$;
         the frozen calibration
\State find the family $f=(o,\pi,d)$; reject if it has no references
\For{each reference size $N_j$ adjacent to $N$ in $\eta=C/N$}
  \For{each reference key $z'\in f$}
    \State $L_{z'}\gets$ curve of $z'$ at horizon $W$
           \Comment{\S\ref{sec:curve}}
    \State $a_{z'}\gets\log\big([\exp L_{z'}-1]_++\varepsilon\big)$
           \Comment{Eq.~\eqref{eq:excess_coordinate}}
  \EndFor
  \State $p\gets$ PCHIP of $\{a_{z'}\}$ over $m$, evaluated at the target $m$
  \State $a_L,a_R\gets$ nearest class-$\chi(m)$ references around $m$
  \State $\widehat a_j\gets(1-\omega)p+\omega a_{\rm lin}$
         \Comment{Eq.~\eqref{eq:graph}}
  \State \quad if no same-class bracket: fall back across $m$, then $d$
\EndFor
\State $\widehat a\gets$ linear interpolation of $\widehat a_j$ in $\eta$, clamped
\State \Return $\widehat K=\min\{M,1+[\exp\widehat a-\varepsilon]_+\}$,
       $\widehat\rho=\widehat K/M$ \Comment{Eq.~\eqref{eq:dynamic_decode}}
\end{algorithmic}
\end{algorithm}

Scan length $N$ and scan count $W$ are kept separate, because equal
total work $H=NW$ does not imply equal reuse: the static gap law depends
on $C/N$, and $W$ counts scan restarts. At each calibrated size $N_j$ we
fit and transfer curves over $W$, then interpolate the predictions
across $C/N$ (Section~\ref{sec:transfer}).

The curve is fitted to cumulative traffic rather than to $K$, and we
impose that cumulative traffic is nondecreasing across the
independently launched horizons. At each calibrated horizon
$W_j\in\{1,4,16,64\}$, we take the geometric mean of the repeated
measurements, make the logarithm of cumulative traffic nondecreasing by
isotonic regression, and interpolate it over $\log W$ with a
shape-preserving piecewise cubic Hermite interpolant
(PCHIP)~\cite{fritsch1984}. Subtracting $\log W$ gives
$L_z(W)\approx\log K$ at the requested horizon; outside the calibrated
range, $K$ is held at the nearest endpoint (Appendix~\ref{app:curvefit}).

\subsection{Encoding traffic relative to aligned scans}
\label{sec:excess}
The Section~\ref{sec:feedback} phase-snapshot check links divergence to fills through the gap law; when $1\le C<N$, PASCAL interpolates measured excess over its coincident-phase $K=1$ count.
With the positive part $[v]_+=\max(v,0)$ of Section~\ref{sec:feedback}, encode each curve as
\begin{equation}
 a_z(W)=\log\!\left([\exp L_z(W)-1]_++\varepsilon\right),
 \qquad \varepsilon=0.01 .
 \label{eq:excess_coordinate}
\end{equation}
Figure~\ref{fig:transfer}(a) plots this encoding. For a transferred value $\widehat a$, the prediction is
\begin{equation}
 \widehat K=\min\{M,\ 1+[\exp\widehat a-\varepsilon]_+\},
 \qquad \widehat\rho=\widehat K/M .
 \label{eq:dynamic_decode}
\end{equation}
Near alignment, this coordinate attenuates a given transfer error in
relative terms (Appendix~\ref{app:excessbound}). With PASCAL’s transfer unchanged, replacing this anchor-relative encoding by $\log K$ raises the four-panel balanced MAPE from 12.04\% to 17.05\%.

\subsection{Transfer to an unmeasured configuration}
\label{sec:transfer}
Lines 7--10 of Algorithm~\ref{alg:predict} transfer the encoded curves
to the target compute; this subsection explains their two choices, the
execution class and the anchored interpolation.

\paragraph{Why execution classes.}
Compute count alone hides changes in generated control flow. On both
binaries, disassembly shows that the loop tail depends on $m\bmod4$,
and remainders two and three run the same tail
(Appendix~\ref{app:obstruction}). We therefore define the execution
class $\chi(m)=\min(m\bmod4,\,2)$ and assume smoothness of the
encoded response only within a class.
Both the compute argument and the disassembly are available before
execution. Every class therefore needs at least one reference: without one,
the class is unidentifiable, because two
response laws can agree on every reference and still differ by a
factor up to $M$ on it (Appendix~\ref{app:obstruction}).

\begin{figure}[t]
\centering
\includegraphics[width=\columnwidth]{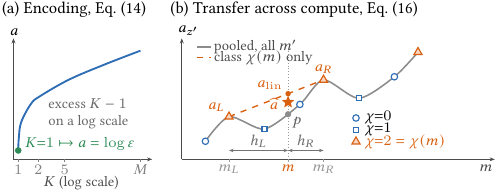}
\caption{Encoding and transfer, on illustrative data. (a)~Log excess over the aligned count $K=1$. (b)~Equation~\eqref{eq:graph}: the prediction $\widehat a$ lies between the pooled trend $p$ and the same-class line $a_{\rm lin}$.}
\label{fig:transfer}
\end{figure}

\paragraph{Anchored interpolation within a class.}
Fix a family $f$ and horizon $W$. First interpolate $a_z(W)$ across all
reference compute values, irrespective of class (PCHIP with at least
three distinct knots, linear otherwise, clamped outside the range), and
write $p$ for the pooled estimate at the query $m$. If the nearest
same-class references bracket $m$ at $m_L<m<m_R$, with encoded values
$a_L,a_R$ and distances $h_L=m-m_L$, $h_R=m_R-m$, the transferred value
is a convex combination of the pooled trend and linear interpolation
within the class,
\begin{equation}
 \begin{aligned}
 \widehat a&=(1-\omega)p+\omega a_{\rm lin},\qquad
 a_{\rm lin}=\frac{h_Ra_L+h_La_R}{h_L+h_R},\\
 \omega&=\frac{\gamma(1/h_L+1/h_R)}{c_0+\gamma(1/h_L+1/h_R)},
 \end{aligned}
 \label{eq:graph}
\end{equation}
where $c_0=(m-m_-)^{-1}+(m_+-m)^{-1}$ uses the adjacent reference
computes $m_-<m<m_+$ without class restriction. This is anchored
harmonic interpolation~\cite{zhu2003}: Equation~\eqref{eq:graph}
minimizes a quadratic energy that penalizes departure from the pooled
trend and squared slope along the two same-class edges
(Appendix~\ref{app:energy}); Figure~\ref{fig:transfer}(b) draws it. The single weight $\gamma$ sets how
strongly same-class references override the pooled trend; we keep
$\gamma=4$ from development without retuning it. Queries without a
same-class bracket fall back to endpoint-clamped interpolation across
compute and then prefetch depth; unsupported occupancy/path families
are rejected rather than extrapolated.

\paragraph{Across scan sizes.}
The class-level predictions at the two reference sizes adjacent to $N$
are interpolated linearly in $\eta=C/N$ and clamped outside the
calibrated range (line~12 of Algorithm~\ref{alg:predict}); both devices
use the same rule and reference-size grid.

\subsection{Calibration protocol and error control}
\label{sec:protocol}
Calibration uses one fixed, reusable collection of configurations and
horizons, and nothing from a test configuration enters any fitting
step. In each development fold, remove a complete key $z=(o,\pi,d,m)$
at every $N,W$, including its short-run measurements. Refit all curves
and execution-class transfers at every reference size using only the
remaining references. For validation keys $\mathcal Z$, let $n_z$ be
the number of scored runs for key $z$, and $K_{zj}>0,\widehat K_{zj}$
their observed and out-of-fold predicted counts. Select by
configuration-balanced MAPE:
\begin{equation}
 \frac{100}{|\mathcal Z|}
 \sum_{z\in\mathcal Z}\frac{1}{n_z}
 \sum_{j=1}^{n_z}
 \left|\frac{\widehat K_{zj}}{K_{zj}}-1\right|.
 \label{eq:dynamic_selection}
\end{equation}
This is miss-rate MAPE because the same $M$ divides both counts. Both
devices use the same 63 calibration depth/compute coordinates and 756-label grid (Section~\ref{sec:dynamic_exp}), selected for coverage without test
traffic. Whole-key folds validate the fixed rule; they do not select a
new rule or weight. Learning baselines receive the same within-device
labels and static features. All forecasts are frozen before either
test panel is acquired and are not refitted afterwards.
Table~\ref{tab:contract} (Appendix~\ref{app:contract}) summarizes the resulting contract.

Proposition~\ref{prop:graph} (Appendix~\ref{app:certificate}) bounds
the transfer error along Equation~\eqref{eq:graph}: a denser compute
grid shrinks its curvature term, but not the error of noisy references.

\subsection{Extension to shared operands}
\label{sec:operand-sharing}
Partially shared operands require static address-sharing
geometry. Given reference bytes $0<B_0(H)\le MbH$, define
$K_0=B_0/(bH)$ and $\kappa=K/K_0$. Set $L_z=\log\kappa$ in
Equation~\eqref{eq:excess_coordinate} and decode as
$\widehat K=\min\{M,K_0(1+[\exp\widehat a-\varepsilon]_+)\}$.
The scan has $B_0=bH$, hence $K_0=1$.
Under the same
encoded-error condition and $\kappa\in[1,M/K_0]$, the relative-error
bound~\eqref{eq:excessbound} carries over because $K_0$ cancels.

For the tested FP16 GEMM, $B_0$ follows from the tile counts of the
output grid and the reduction length, and the transfer runs along the
two grid dimensions in turn (Appendix~\ref{app:gemm}); the three steps
of Sections~\ref{sec:curve}--\ref{sec:transfer} are reused unchanged,
with $B_0$ as the new static reference.

\subsection{Scope and limitations}
\label{sec:scope}
The gap law is exact for its abstract traces and
Proposition~\ref{prop:graph} holds under its stated conditions; the
curve fit, aligned $K = 1$ anchor, the class rule and the boundary rules are
modeling choices validated only within the calibrated scope of
Table~\ref{tab:contract}. Theorem~\ref{thm:feedback} models one mechanism read from the kernel
source and leaves the rest of the GPU schedule out. Calibration is device- and
operator-specific; unknown binaries and concurrent kernels are
unsupported.

\section{Experimental Evaluation}
\label{sec:exp}
\subsection{Experimental Setup and Overhead}
We measure the cache miss rate of fixed-phase kernels to test the theorems of Section~\ref{sec:static}, and of dynamic kernels with shared cyclic scans, such as GEMM, to test the prediction pipeline of Section~\ref{sec:dynamic}.
NVIDIA GB10 has 48 SMs and a 24MiB L2 cache, and Jetson Thor T5000 has 20 SMs and a 32 MiB cache. 
Referring to the capacity scan experiments shown in Figure~\ref {fig:capacity}, we choose 20MiB as the effective cache capacity for GB10 because significant off-chip traffic is observed when the scanned data size exceeds 20MiB. By contrast, Jetson Thor's effective capacity is equal to its nominal capacity. The data block size is 16KiB in dynamic experiments, so the effective capacity is $C=1280$ for GB10 and $C=2048$ for Jetson Thor, respectively.

The predicted metric is the fill-equivalent miss rate $\rho_H$ of Equation~\eqref{eq:dynamic_target}: DRAM-to-L2 fill bytes divided by logical read bytes. Unlike the lookup miss rate, which measures the fraction of L2 accesses that miss, our fill-equivalent miss rate measures DRAM-to-L2 fill bytes per logical operand-read byte. We use this traffic-oriented metric because it does not classify MSHR misses and MSHR hits into a single category. The metric quantifies the actual penalties of cache misses, providing a meaningful input to bandwidth-based performance models for ML kernels.

Calibration is conducted once per device and kernel family:
\CostGBRefRuns\ scan reference runs, \CostGBRefWallMin/\CostThorRefWallMin~minutes
of profiling on GB10/Thor. For the 42 held-out scan keys with $N=4096$ and
$W=64$, Nsight Compute restricted to the single fill counter PASCAL predicts
takes a median \NcuGBSec/\NcuThorSec~s per run on the device, whereas a
PASCAL forecast of the same runs takes \CostGBPredictMs/\CostThorPredictMs~ms
on average on one CPU core and needs neither the run nor the device.
Simulating one such run, occupancy one on the synchronous path, in
GPGPU-Sim~\cite{accelsim0} takes an estimated
\SimOneLow--\SimOneHigh~CPU-hours. A PASCAL forecast is thus three to four orders of magnitude faster than profiling and eight to nine orders of magnitude faster than cycle-accurate simulation.

\begin{figure}[t]
\centering
\includegraphics[width=\columnwidth]{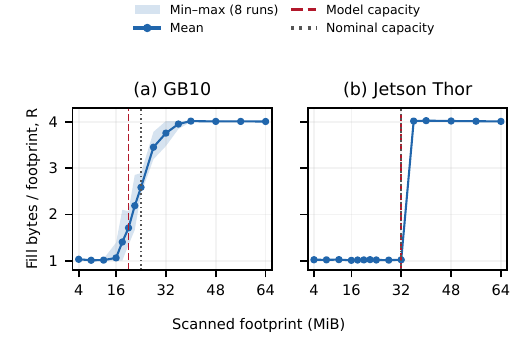}
\caption{Capacity scan: 15 sizes, eight repeats/device,
with the same synchronous kernel, 64 MiB allocation, 16 KiB
blocks, one CTA/SM.}
\Description{Off-chip fill volume normalized by scanned bytes increases
from approximately one at small sizes to nearly four at 32 MiB.
Vertical lines mark the 20 MiB model capacity and 24 MiB nominal capacity.}
\label{fig:capacity}
\end{figure}
\subsection{Static accuracy and replacement bounds}
\begin{figure*}[t]
\centering
\includegraphics[width=\textwidth]{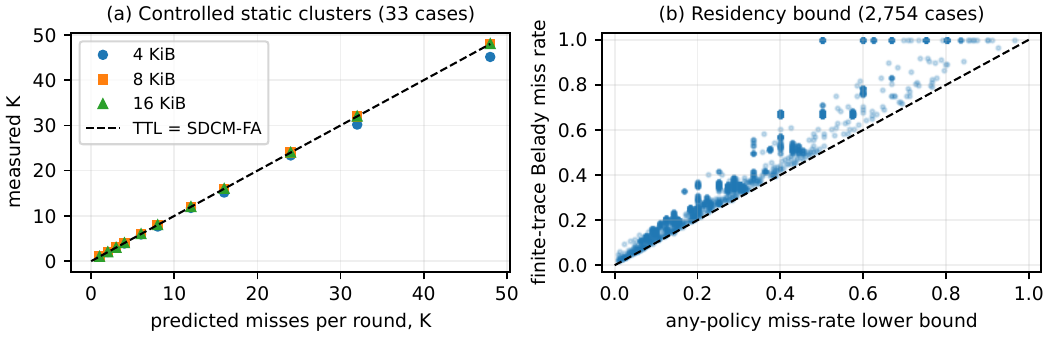}
\caption{GB10 static hardware accuracy and abstract-model validation.
(a) The first experiment (clusters), 33 configurations in Table~\ref{tab:static}. (b) Comparing our lower bound on cache misses with the miss count under optimal replacement
  (Belady).}
\label{fig:static}
\end{figure*}

\begin{table}[t]
\centering\small
\caption{GB10 static hardware miss-rate MAPE (\%).}
\label{tab:static}
\begin{tabular}{lrrr}
\toprule
Cases (\# of configurations) & PASCAL & SDCM--FA & Che--IRM\\
\midrule
Clusters: all sizes (33) & 1.19 & 1.19 & 632.96 \\
\midrule
Phase/size sweep (121) & 9.97 & 9.97 & 1124.63 \\
\bottomrule
\end{tabular}
\end{table}

Static experiments mean that the phase distribution among different workers does not change during the experiments. In actual kernels, it can happen in many cases, e.g., the kernel latency is short, or the pressure on the memory subsystem is low. In our micro-benchmarks, we use a grid-level barrier (\texttt{cg::grid\_group.sync()}) to ensure static conditions so that we can compare PASCAL with SDCM-FA and Che-IRM fairly, while more actual dynamic experiments without the grid-level barrier are conducted in Section~\ref{sec:dynamic_exp}. The SDCM-FA baseline is given the exact symbolic reuse distance profile rather than the hardware tracing. Algorithm~\ref{alg:static-kernel} presents the structure of the micro-benchmark kernel. 

We verify the accuracy of Equation~\eqref{eq:ttl} against two experimental settings on GB10. For the first experiment, we equally partition thread blocks into different clusters, and the thread blocks inside the same cluster have the same phase. The phases are equally spaced. We test 33 cases with $M$ = 48, $N$ = 8192, 4000 rounds, 4/8/16 KiB blocks, and 11 cluster counts from 1 through 48.

The second experiment uses various phase patterns (e.g. Wrapped Gaussian, Uniform Random, and Equally Spaced) and distinct $N/C$ and $M$ values with 2 KiB blocks. The results of all 154 cases are presented in Table~\ref{tab:static}. Across these 154 static configurations, Equation~\eqref{eq:ttl} and SDCM-FA yield the same MAPE, and they significantly outperform Che--IRM. The results verify our analysis in Section~\ref{sec:background}: Che--IRM does not capture phase-dependent cache sharing between thread blocks.
Therefore, the gap law matches the exact-distance baseline at $O(M\log M)$ cost; the residual error reflects hardware effects neither model captures.

To verify the bounds of Theorems~\ref{thm:lru} and~\ref{thm:policy}, we simulate every case with $N=2,\ldots,6$, $M=1,\ldots,3$,
all phase vectors, and $1\le C<N$, which gives 2254 cases; 500 seeded larger cases bring the total to 2754. LRU miss counts lie within the predicted bounds whenever $D\le C$. Figure~\ref{fig:static}(b) shows that even optimal (Belady) replacement incurs no fewer misses than our policy-independent lower bound, and no case violates it. The figure uses forced-allocation MIN; against MIN with bypass, the mean slack falls from 0.11 to 0.07. 
The bound isolates the traffic that remains under every demand-paging replacement policy at the stated capacity.

\begin{algorithm}[t]
\caption{Static phase-controlled scan (one thread block $w$)}
\label{alg:static-kernel}
\begin{algorithmic}[1]
\Require array of $N$ logical blocks, phase $s_w$, rounds $H$
\State $a\gets0$; each thread owns one 16-byte lane
\For{$k=0,\ldots,H-1$}
  \State $j\gets(k+s_w)\bmod N$
  \State all threads load their lanes of block $j$ with \texttt{\_\_ldcg}
  \State consume the loaded values into $a$
  \State grid-wide barrier across all $M$ thread blocks
\EndFor
\State write $a$ to prevent dead-code elimination
\end{algorithmic}
\end{algorithm}

\subsection{Why dynamic prediction is necessary}

\begin{figure*}[t]
\centering
\includegraphics[width=\textwidth]{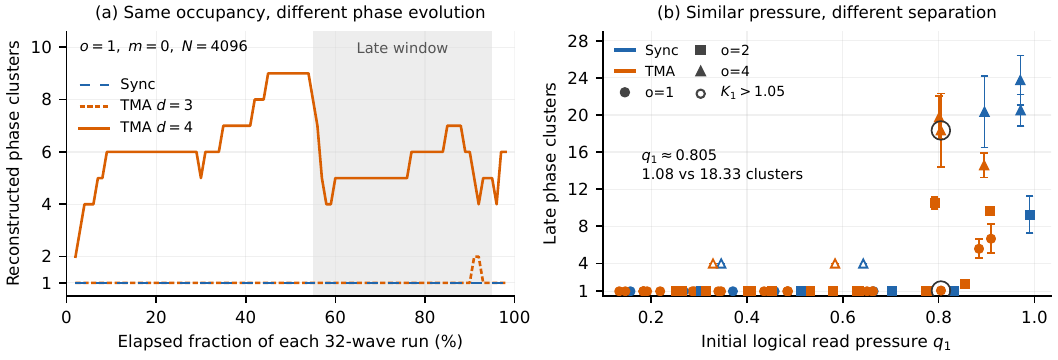}
\caption{GB10 phase separation.
(a) Same occupancy, different evolution.
(b) Initial pressure versus late separation.
Worker positions are estimated from start/end times; error bars show variation over time.}
\Description{Three matched one-CTA-per-SM trajectories show synchronous
and depth-three TMA scans near one cluster, but depth-four TMA separating.
The complete sixty-configuration pressure plot retains four previously
omitted cases as hollow markers. Two highlighted TMA configurations have
pressure about 0.805 but late mean cluster counts of 1.08 and 18.33.}
\label{fig:phenomena}
\end{figure*}
\begin{algorithm}[t]
\caption{Open scan with a binary-defined priming count}
\label{alg:dynamic-kernel}
{
\begin{algorithmic}[1]
\Require $N$, path $\pi$, prefetch depth $d$, compute work $m$
\Require priming count $n_{\rm p}$ fixed by the binary: $n_{\rm p}=d-1$ on GB10, $n_{\rm p}=d$ on Thor
\State each of $WM$ CTAs starts at block zero; no grid barrier
\State if $\pi$ is TMA, issue blocks $0,\ldots,\min(n_{\rm p},N)-1$
\For{$j=0,\ldots,N-1$}
  \If{$\pi$ is TMA}
    \State issue block $j$ if not yet issued; wait for its data
  \Else
    \State cooperatively load block $j$ synchronously
  \EndIf
  \State consume block $j$
  \State execute $m$ iterations of the compute loop
  \State perform required intra-CTA synchronization
  \If{$n_{\rm p}>0$ and $j+n_{\rm p}<N$}
    \State issue block $j+n_{\rm p}$ into its available buffer
  \EndIf
\EndFor
\State write results and optional diagnostic markers
\end{algorithmic}}
\end{algorithm}
Actual kernels do not usually use grid-level barriers as we
do in the static experiments, and progress divergence among workers can
happen in such cases. To understand progress divergence, we consider
different techniques frequently used in kernel optimization in our
benchmark kernel (Algorithm~\ref{alg:dynamic-kernel}). The kernel can
either use Tensor Memory Accelerator (TMA) to fetch data, which is similar
to DMA in NPUs, or synchronous loads (through \texttt{\_\_ldcg()}).
The depth of software pipelines, the FLOPs per block, and occupancy are
also variables in our controlled experiments. These four variables form the
configuration key $z=(o,\pi,d,m)$ of Section~\ref{sec:curve}.

We define the logical read pressure of a configuration as
\begin{equation}
 q_1=\frac{Mb/\bar t_1}{B_{\rm ref}},
 \qquad B_{\rm ref}=2.10\ {\rm TB/s},
 \label{eq:pressure}
\end{equation}
where $\bar t_1$ is the mean time per work-normalized round in a $W=1$
reference run and $B_{\rm ref}$ is an empirical read-rate reference.
The ratio counts logical reads, including hits and merged requests, and
is not an L2-to-SM utilization counter; neither $\bar t_1$ nor a pressure
threshold is an input to the predictor. The subscript 1 marks this $W=1$
run throughout: $K_1$ is its whole-run count $\overline K_H$
(Equation~\eqref{eq:dynamic_target}).

We define a phase cluster by merging workers whose rounded
positions are connected through cyclic gaps of at most 32 blocks.
In separate $W=32$ GB10 checks, eight sparse markers per CTA confirm about one versus seven clusters at $d=3,4$; median runtime changes by at most 1.8\%, and the separation persists for 16--128-block merge thresholds.
Progress divergence can be summarized by the number of phase clusters.
When progress divergence occurs, the phases of different workers can form
more clusters. Figure~\ref{fig:phenomena}(a) shows the complexity of the
problem; occupancy alone cannot determine whether progress divergence
will occur, since the TMA kernel with $d=4$ suffers from progress
divergence when occupancy is 1, while the synchronous kernel
does not. Figure~\ref{fig:phenomena}(b) compares phase separation with
initial memory pressure evaluated by Equation~\eqref{eq:pressure}.
Among configurations with $K_1\le1.05$, greater separation accompanies
higher initial memory pressure in this sweep. However, we cannot
quantitatively calculate the divergence from memory pressure alone,
because kernels at similar pressure can present different degrees
of divergence.

\subsection{Dynamic prediction: comparisons}
\label{sec:dynamic_exp}
\begin{figure*}[t]
\centering
\includegraphics[width=\dimexpr\textwidth-2\fboxsep-2\fboxrule\relax]{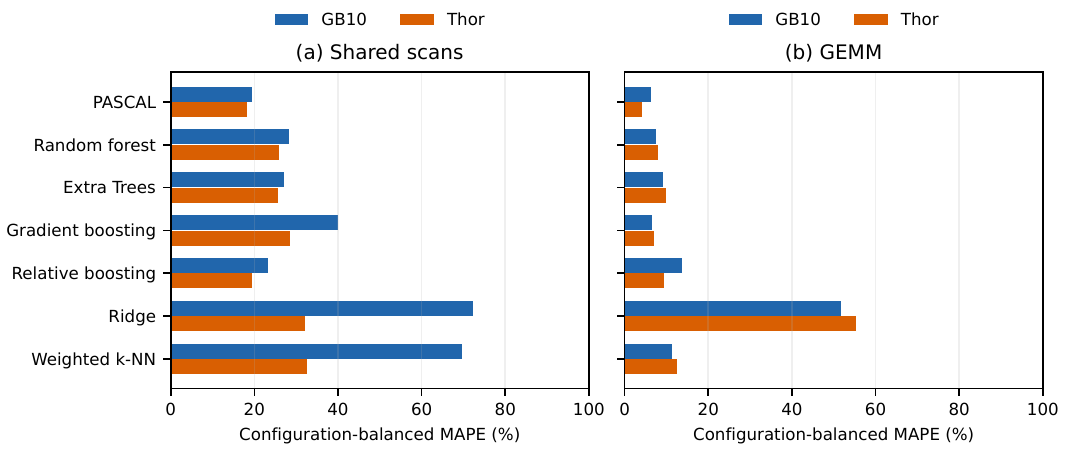}
\caption{The same learning-baseline comparison on both
workloads and devices, with a common error axis. Each method uses the same
calibration labels and held-out outcomes within a device--workload group.
Scans use 756 labels per device; GEMM uses 120 on each. Relative boosting uses absolute-error gradient boosting with inverse observed-traffic weights.
Table~\ref{tab:literature-cache} reports the literature comparisons and
their equally weighted aggregate.}
\label{fig:prediction}
\end{figure*}

\begin{table}[t]
\centering\small\setlength{\tabcolsep}{3pt}
\begin{tabular}{@{}llrrrr@{}}
\toprule
Workload & Device & PASCAL & TileSight & PPT-GPU & SDCM\\
\midrule
Scan & GB10 & 19.41 & 63.10 & 78.70 & 29.24\\
Scan & Thor & 18.14 & 23.55 & 78.26 & 23.21\\
GEMM & GB10 & 6.41 & 29.12 & 26.18 & 28.21\\
GEMM & Thor & 4.22 & 27.14 & 27.40 & 24.07\\
\midrule
\multicolumn{2}{@{}l}{Balanced overall} & 12.04 & 35.73 & 52.63 & 26.18\\
\bottomrule
\end{tabular}

\caption{Miss-rate MAPE (\%) decomposed
by workload and device.}
\label{tab:literature-cache}
\end{table}

We test two categories of kernels here: shared cyclic scan kernels with different settings shown in Algorithm~\ref{alg:dynamic-kernel} and CUTLASS GEMM kernels~\cite{Thakkar_CUTLASS_2023}. The synthetic kernels highlight the shared cyclic scan memory access pattern and cover a wide range of occupancies and prefetch depth, while CUTLASS GEMM kernels test the usability of PASCAL in production environments.

For the shared cyclic scan kernels, we test 42 depth/compute pairs with the highest feasible occupancy settings at each depth on both devices.
The 42 held-out keys $(o,\pi,d,m)$ belong to these calibrated kernel families and enter neither model selection nor calibration.
Each key is tested with different workload sizes, from short-work pairs (e.g., $N = 4096, W = 1$) to long-work pairs (e.g., $N = 5120, W = 80$). For the GEMM kernels ($M_g \times N_g \times K_g$), 60 different shapes are tested, with $M_g\in\{640,896,1280,2560,5120\}$,
$N_g\in\{640,896,1280,2560\}$ and $K_g\in\{1280,2560,5120\}$. We hold out these shapes for testing, while using 120 different shapes with $M_g\in\{256,\ldots,8192\}$, $N_g\in\{256,\ldots,4096\}$, and $K_g\in\{1024,\ldots,8192\}$ for calibration. The tile shape is $128\times128\times64$, and the precision is FP16 for inputs and FP32 for outputs. Both binaries fix a direct CTA raster (GB10 identity swizzle; Thor direct grid mapping), and 25/60 GB10 versus 40/60 Thor test shapes fit both FP16 inputs plus FP32 output within effective L2.

Table~\ref{tab:literature-cache} compares our model with the cache components in TileSight~\cite{tilesight}, PPT-GPU~\cite{PPTGPU}, and exact SDCM. All three SDCM-based cache model columns share one hit rule, a reuse hits if fewer than eight of its intervening lines map to its set, and differ only in inputs and arithmetic: TileSight takes reuse distances from its own tile schedule, PPT-GPU and exact SDCM from our symbolic scan profile, and only exact SDCM evaluates the binomial hit probability without a normal approximation. PASCAL achieves a MAPE of 12.04\%, which outperforms all three baselines. We also compared PASCAL's prediction pipeline with six machine-learning-based methods in Figure~\ref{fig:prediction} under the same training set and the same testing set. Across scan test runs, $K>1.05$ occurs in 145/420 on GB10 and 239/420 on Thor.
Each key runs twice per $(N,W)$, and predicting each run by its repeat already gives 5.62/13.90\% on GB10/Thor scans. 

Our exploratory multi-query FlashAttention-2 forward implementation uses 96 reference and 24 held-out runs per device at disjoint sequence lengths, where Thor $K$ spans 1.04--7.71 across depths 1--3; query (Q)-adjusted shared key/value (K/V) traffic gives PASCAL 1.96/11.44\% MAPE on GB10/Thor versus 12.06/40.88\% for the aligned $K=1$ baseline.
Each CTA computes a 64-query tile using $QK^\top$, online softmax, and $PV$; sampled forward outputs agree with a CPU reference to within $5.5\times10^{-6}$.
On 15 further held-out $(N,W,d)$ cells per device (two repeats each), the same fit gives 1.00/11.82\% MAPE on GB10/Thor versus 14.33/45.58\% at $K=1$.
\subsection{End-to-end latency with TileSight}
\label{sec:e2e}
\begin{table}[t]
\centering\small\setlength{\tabcolsep}{3pt}
\caption{Latency MAPE (\%) with TileSight, PASCAL, or measured L2 fills as the pipeline's cache input.}
\label{tab:e2e}
\begin{tabular}{@{}lllrrr@{}}
\toprule
& & & \multicolumn{3}{c}{Cache input to the pipeline}\\
\cmidrule(l){4-6}
Workload & Device & Runs & TileSight & PASCAL & Measured\\
\midrule
Scan & GB10 & all & 23.89 & 23.89 & 23.89\\
Scan & Thor & all & 76.21 & 76.21 & 76.20\\
GEMM & GB10 & all & 18.17 & 13.14 & 10.92\\
GEMM & GB10 & DRAM-bound & 42.06 & 19.85 & 10.11\\
GEMM & Thor & all & 81.84 & 81.70 & 81.54\\
\bottomrule
\end{tabular}

\end{table}
To test whether the Table~\ref{tab:literature-cache}
difference holds in a complete performance prediction pipeline (Table~\ref{tab:e2e}), we change only the
cache input of TileSight's pipeline and wave model~\cite{tilesight}:
its own cache component, PASCAL, or measured fills (the pipeline's error
floor). Its equations, microbenchmarked peaks, the held-out keys and
PASCAL's calibration are shared; a scan block is one 16~KiB load and $m$
matrix instructions over $\max(d,1)$ stages. Nothing is fitted to a
duration; latency is the kernel time from the same profiling pass as each
fill counter.

A run is DRAM-bound when measured fills increase the pipeline's predicted time by more than 2\% versus zero DRAM traffic; the intervals below are 95\% paired key-bootstrap confidence intervals.
On GB10 GEMM, PASCAL lowers latency MAPE from
18.17\% to 13.14\%, and from 42.06\%
to 19.85\% on its 27\ DRAM-bound runs
(paired-gap intervals [\EtoEGemmGBGapLow, \EtoEGemmGBGapHigh] and
[\EtoEGemmGBBoundGapLow, \EtoEGemmGBBoundGapHigh] points; measured-fill input
\EtoEGemmGBBoundMeasured\%). Elsewhere DRAM time hides below L2 or compute
time, so all inputs agree even when miss-rate errors differ (63.10\%
versus 19.41\% on GB10 scans): as in Equation~\eqref{eq:catchup_gain},
fewer fills buy no time once miss service fits within the overlap. On Thor, even measured fills leave median predicted time at only 0.175/0.182 of observed scan/GEMM time, locating the dominant error outside the cache input.
\section{Related Work}
\subsection{Footprints, distances, and statistical caching}
Denning’s working set theory~\cite{denning1, denning2},  Mattson’s stack analysis~\cite{stack}, and HOTL~\cite{hotl} are the conceptual foundations of our static geometry analysis. We reinterpret the working set theory in our terms and specialize it as the gap formula for AI workloads. Such a simplification drives our research into more complex dynamic analysis. Che–IRM and its temporal-locality extensions~\cite{che1, che2, che3} differ mainly in the request laws supplied to characteristic-time analysis. StatCache~\cite{statcache} and StatStack~\cite{statstack} reduce profiling cost through statistical locality measurements. Similarly, whole-program reuse prediction~\cite{whole} and static array-profile estimation~\cite{array} show that pre-execution reuse distance prediction is possible.
Database cooperative scans preserve buffer reuse by grouping, throttling, or rescheduling concurrent readers~\cite{lang2006,zukowski2007}; our fixed-schedule setting instead bounds cache traffic from their relative progress.
\subsection{Multicore and GPU models}
Cycle-accurate simulators, such as gpgpu-sim~\cite{accelsim0}, accel-sim~\cite{AccelSim, accelsim2}, and Sim-FA~\cite{simfa} model the memory subsystem pipeline in detail. Therefore, while high accuracy is possible through these methodologies, the simulation speed tends to be extremely low. This challenge is even more severe in the Large Language Models (LLM) era due to the large problem size and large-scale design space exploration (DSE) with these tools is not practical in such a context. Nugteren et al.'s reuse-distance GPU cache model~\cite{nugteren2014} and OWL's CTA-aware warp scheduling~\cite{jog2013} target complementary prediction and scheduling choices.
\subsection{Parallel locality}
Blelloch and Gibbons~\cite{Blelloch} prove shared-cache scheduling guarantees relative to sequential cache performance, using additive cache augmentation tied to parallelism and computation depth. Chen et al.~\cite{chen} demonstrate constructive sharing with parallel depth-first scheduling and study task granularity on CMPs. These papers establish why scheduling is a locality resource. 
Theorem~\ref{thm:coherence} addresses a different guarantee. Rather than choosing a schedule and comparing its cache complexity with a sequential execution, it conditions directly on the realized progress lead \(\sigma\) and gives a sharp, schedule-independent \(2\sigma-1\) capacity threshold for preserving shared fills. Thus the earlier results establish that scheduling can create locality, whereas our result quantifies exactly how much relative progress a cyclic shared scan can tolerate under an arbitrary interleaving.
\section{Conclusion}
Shared-cache reuse in parallel cyclic scans is governed not only by what each worker accesses, but by how far their progress is allowed to separate. PASCAL formalizes this connection: cyclic gap geometry determines traffic for fixed relative progress, while a sharp \(2\sigma-1\) capacity threshold characterizes when arbitrary progress interleavings still preserve sharing.
Dynamic GPU execution does not expose future progress statically, so the predictor calibrates the accumulated consequence of this latent evolution rather than attempting to reconstruct it.
Under its stated recurrence, Theorem~\ref{thm:feedback} identifies when hidden miss latency prevents cache hits from repairing a timing gap. Prefetch depth, compute work, and occupancy index the calibrated predictor, whose balanced held-out MAPE is 12.04\%.

\section*{Acknowledgments}
We acknowledge the use of Claude Code and OpenAI Codex for language editing and code implementation. No AI tools were used to generate research ideas, analyses, or experimental results. All code and text were reviewed, tested, and validated by the authors, who take full responsibility for the contents of this work.
\bibliographystyle{ACM-Reference-Format}
\bibliography{sample-base}
\clearpage
\appendix
\section*{Appendix}
\section{Notation}
\label{app:notation}
Table~\ref{tab:notation} lists the symbols that recur across sections;
symbols used only near their definition are omitted. Workers are indexed
by $w$, distinct phases and gaps by $i$, and rounds or requests by $k$;
$m$ always denotes compute work per block.

\begin{table}[H]
\centering\footnotesize
\caption{Recurring notation and where each symbol is defined.}
\label{tab:notation}
\begin{tabular}{@{}l p{.62\columnwidth} l@{}}
\toprule
Symbol & Meaning & Def.\\
\midrule
$N,\ C$ & Blocks per scan; cache capacity in blocks & \S\ref{sec:model}\\
$M$ & Concurrently resident workers & \S\ref{sec:model}\\
$w,\ s_w$ & Worker index; phase of worker $w$ & \S\ref{sec:model}\\
$D,\ g_i$ & Number of distinct phases; $i$-th gap & \S\ref{sec:model}\\
$\Ts$ & TTL window at capacity $C$ & Eq.~\eqref{eq:waterfill}\\
$K,\ \rho$ & Misses (fills) per round; miss rate $K/M$ & \S\ref{sec:ttl}\\
$\sigma_E$ & Largest lead of any worker over another & Eq.~\eqref{eq:progress_spread}\\
$d$ & Prefetch depth & \S\ref{sec:feedback}\\
$G_d$ & Largest time saving per request from hits & Eq.~\eqref{eq:catchup_gain}\\
\midrule
$o,\ \pi$ & Resident CTAs per SM; load path (sync or TMA) & \S\ref{sec:curve}\\
$m$ & Compute work per block ($m_{\rm comp}$) & \S\ref{sec:curve}\\
$z,\ f$ & Key $(o,\pi,d,m)$; family $(o,\pi,d)$ & \S\ref{sec:curve}\\
$W,\ H$ & CTA scans per resident worker; $H=WN$ & \S\ref{sec:target}\\
$\overline K_H$ & Whole-run average of $K$ & Eq.~\eqref{eq:dynamic_target}\\
$L_z$ & Fitted $\log K$ curve of key $z$ & \S\ref{sec:curve}\\
$a_z,\ \varepsilon$ & Encoded excess over $K=1$; floor $0.01$ & Eq.~\eqref{eq:excess_coordinate}\\
$\chi(m)$ & Execution class, $\min(m\bmod4,2)$ & \S\ref{sec:transfer}\\
$\omega,\ \gamma$ & Within-class weight; its strength, $4$ & Eq.~\eqref{eq:graph}\\
$\eta$ & $C/N$ & \S\ref{sec:transfer}\\
\midrule
$K_0,\ \kappa$ & Static reference count; $K/K_0$ & \S\ref{sec:operand-sharing}\\
$M_g,N_g,K_g$ & GEMM dimensions (not $M$, $N$, $K$ above) & App.~\ref{app:gemm}\\
\bottomrule
\end{tabular}
\end{table}

\section{Proofs for Section~\ref{sec:static}}
\label{app:sec3}

\begin{proof}[Proof of Lemma~\ref{lem:footprint}]
For $0\le t<N$, partition the address ring into the territories $[y_i,y_{i+1})$. Within territory $i$, the nearest preceding phase in the scan direction is $y_i$. Hence a block is covered by a length-$t$ scan segment exactly when it lies among the first $\min(g_i,t)$ blocks of that territory. For $t\ge N$, every block is covered and the same formula equals $N$. Summing over the territories gives $U(t)$; taking the finite difference gives $K(t)$.
\end{proof}

\begin{proof}[Proof of Theorem~\ref{thm:ttl}]
At each distinct phase, only the first current-round request can miss.
Its previous occurrence was requested by the next phase $g_i$ rounds
earlier (or $N$ rounds earlier for a single phase). It is remembered
exactly when $g_i\le\Ts$. Feasibility follows from $U(\Ts)\le C$.
\end{proof}

\begin{proof}[Proof of Theorem~\ref{thm:lru}]
Let $\mathrm{RD}$ count distinct blocks strictly between two successive
references to the same block, separated by $g>0$ rounds.
The intervening sequence contains $g-1$ complete rounds and is contained
in $g+1$ rounds including the referenced block. Therefore
$U(g-1)\le\mathrm{RD}\le U(g+1)-1$.
LRU hits iff $\mathrm{RD}<C$. A lag at most $\Ts-1$ forces a hit,
and a lag at least $\Ts+2$ forces a miss. Current-round duplicates hit
because at most $D-1<C$ distinct blocks intervene.
\end{proof}

\begin{proof}[Proof of Theorem~\ref{thm:policy}]
A hit with its previous reference inside the horizon requires uninterrupted
residency throughout that inter-reference interval: demand paging cannot
reload an unrequested, evicted block. Intervals of the same block have
disjoint interiors, and at most $C$ blocks are resident at a time.
Thus their total length is at most $CH$. If $Hx_w$ counts such hits of
worker type $w$, then $x_w\le1$ and $\sum_w\tau_wx_w\le C$.
At most $N$ additional hits can have their preceding reference outside
the horizon. Divide by $H$ and let $H$ increase.
\end{proof}

\begin{proof}[Proof of Theorem~\ref{thm:coherence}]
\label{p:coherence}
Write $\sigma=\sigma_E\ge1$. Immediately after the first request to
index $k$, every worker has completed at least $k+1-\sigma$ requests.
Until another worker requests $k$, that worker's count is at most $k$;
after every intervening event, all counts are therefore at most
$k+\sigma$. Thus all intervening request indices lie in
$[k+1-\sigma,k+\sigma-1]$. Excluding $k\bmod N$ leaves at most
$2\sigma-2$ distinct addresses. The same bound holds since its most
recent reference, so Mattson's LRU criterion~\cite{stack} forces a hit.
There are only $\max_w n_w(E)$ first indices. Also,
$E=\sum_w n_w(E)\ge M\max_w n_w(E)-(M-1)\sigma$, proving the bounds.

For sharpness, fix $\sigma\ge2$ and choose $N=2\sigma$. Worker 1 first requests $0,\ldots,\sigma-1$;
worker 2 requests $0,\ldots,\sigma-2$; worker 1 requests $\sigma,\ldots,2\sigma-2$;
worker 2 then requests $\sigma-1$. The largest count difference is $\sigma$.
Between the two references to $\sigma-1$ there are exactly $2\sigma-2$ distinct
other addresses, so the latter reference misses at capacity $2\sigma-2$.
\end{proof}

\begin{proof}[Proof of Theorem~\ref{thm:feedback}]
Write $c=\alpha+\beta$ and $P=\max\{c,(\lambda+\beta)/r\}$.
The recurrence has a one-index edge of weight $c$ and an $r$-index
edge of weight $\lambda+\beta$. Iterating each edge gives
$I_n\ge nP-O(1)$. For $B=\max_{0\le j<r}(I_j-jP)$,
induction gives $I_n\le nP+B$, since both edge weights are at most
their index lengths times $P$. Hence $I_n=nP+O(1)$.
Monotonicity sandwiches every variable-latency clock between its
constant-bound clocks, proving Equation~\eqref{eq:catchup_limit}.
\end{proof}

\section{Derivations for Section~\ref{sec:dynamic}}
\label{app:sec4}

\subsection{A short profile does not identify a saturation law}
\label{app:identifiability}
Assume an instantaneous fill count
$K_{\rm inst}(u)=1+\mathcal A[1-\exp(-(u/\ell)^\nu)]$ at work coordinate $u$,
with amplitude $\mathcal A\ge0$, scale $\ell>0$, and exponent $\nu>0$.
Averaging $K_{\rm inst}$ over $0\le u\le H$ gives, as $H/\ell\to0$,
\begin{equation}
 \overline K_H-1=
 \frac{\mathcal A}{\nu+1}(H/\ell)^\nu
 +O\!\left(\mathcal A(H/\ell)^{2\nu}\right).
 \label{eq:identifiability}
\end{equation}
Short profiles primarily identify $\mathcal A/\ell^\nu$, not the plateau
and growth scale separately. This is why Section~\ref{sec:target} fits
the observed integral directly, without assuming this saturation law
or solving for an unobserved equilibrium.

\subsection{Fitting the traffic curve}
\label{app:curvefit}
Section~\ref{sec:curve} describes the fit in words; here it is as
formulas. At each distinct reference horizon $W_j$, aggregate positive measured
counts into their geometric mean $\widetilde K_z(W_j)$, project
$Y_j=\log W_j+\log\widetilde K_z(W_j)$ onto nondecreasing values by least
squares (isotonic regression), and interpolate over $\log W_j$ with a
shape-preserving piecewise cubic Hermite interpolant
(PCHIP)~\cite{fritsch1984}, denoted $S_z$. The untransferred curve is
\begin{equation}
 L_z^0(W)=S_z(\log W)-\log W ,
 \label{eq:trafficcurve}
\end{equation}
an estimate of $\log K$ at the requested horizon. The calibration
grid supplies $W=1,4,16,64$ at every reference key and size, so
$L_z=L_z^0$ and no synthetic point is needed.

\subsection{Relative error of the excess coordinate}
\label{app:excessbound}
If the true count is
$K=1+e\in[1,M]$, $a=\log(e+\varepsilon)$, and $|\widehat a-a|\le\delta$,
then for $0<\varepsilon\le1$
\begin{equation}
 \frac{|\widehat K-K|}{K}
 \le \frac{e+\varepsilon}{1+e}(\exp(\delta)-1)
 \label{eq:excessbound}
\end{equation}
Near alignment the prefactor attenuates a given coordinate error; how
small that error is depends on the transfer
(Appendix~\ref{app:certificate}).
\begin{proof}
Positive-part projection is 1-Lipschitz, and clipping to $M$ cannot
increase error for $K\le M$. Before clipping, the excess error is at
most $(e+\varepsilon)|\exp(\widehat a-a)-1|$, which gives
Equation~\eqref{eq:excessbound}.
\end{proof}

\subsection{Execution classes and the unsampled-class obstruction}
\label{app:obstruction}
For the GB10 scan binary, disassembly of the matrix loop gives
$m=16q+8b_8+4b_4+(m\bmod4)$, with $q\ge0$ full unrolled groups
and $b_8,b_4\in\{0,1\}$ residual groups. Remainder
zero skips the tail, one exits early, and two and three traverse the
same tail blocks with the third matrix instruction predicated, which gives
$\chi(m)=\min(m\bmod4,\,2)$. Thor's four-iteration
unrolled tensor loop has the same tail partition, so one class rule
serves both binaries without equating their instructions or timings.

\begin{proposition}[An unsampled-class obstruction]
\label{prop:obstruction}
Fix a family and horizon, and assume smoothness only within each
execution class. If a target class has no references, two response
laws can agree on all references yet give target counts $1$ and
$\Lambda$, where $1<\Lambda\le M$. Every point predictor then has worst-case
relative error at least $(\Lambda-1)/(\Lambda+1)$ over these laws.
\end{proposition}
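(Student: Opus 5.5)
This is a two-point (Le Cam–style) indistinguishability argument. First we build two response laws that a predictor cannot tell apart from the references. Then we show that any single point prediction must be far, in relative error, from one of the two target counts.

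\emph{Construction.} Fix the family $f$ and horizon $W$. Let $\chi^\star$ be the target class, which by assumption contains no reference compute value. Start from any admissible law $K^{(1)}(m)$ on all classes, for example the constant aligned count $K^{(1)}\equiv1$, which lies in $[1,M]$. Define $K^{(2)}$ to equal $K^{(1)}$ on every class other than $\chi^\star$, and to equal the constant $\Lambda$ on class $\chi^\star$. Each law is constant, hence smooth, within every class. The only smoothness assumption is within a class, so the jump between classes is allowed. Both laws take values in $[1,M]$, since $1<\Lambda\le M$, so both are valid fill counts.

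\emph{Agreement on references.} Every reference key lies in a class other than $\chi^\star$, so $K^{(1)}$ and $K^{(2)}$ coincide on all references. Under this assumption the calibration data, and everything derived from it, are identical under the two laws. This covers the fitted curves $L_z$, the encodings $a_z$, and the pooled and same-class interpolants. Any point predictor is a function of these data together with the static key, so it outputs the same value $\widehat K$ at the target under both laws. At the target, however, the true counts are $1$ and $\Lambda$.

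\emph{Minimax bound.} It remains to show that
\[
\min_{\widehat K>0}\max\Bigl\{|\widehat K-1|,\ \tfrac{|\widehat K-\Lambda|}{\Lambda}\Bigr\}\ \ge\ \tfrac{\Lambda-1}{\Lambda+1}.
\]
The first term is increasing in $\widehat K$ on $[1,\Lambda]$ and the second is decreasing. Outside $[1,\Lambda]$, one of the two errors only grows, so the optimum lies in this interval. The minimax point is where the two errors are equal: $\widehat K-1=(\Lambda-\widehat K)/\Lambda$. Solving gives $\widehat K=2\Lambda/(\Lambda+1)$, and the common error is $(\Lambda-1)/(\Lambda+1)$. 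Therefore every point predictor has worst-case relative error at least $(\Lambda-1)/(\Lambda+1)$ over the two laws.

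The main obstacle is pinning down what a ``response law'' is and what ``smoothness only within a class'' means, so that the class-wise constant construction is clearly admissible. It also requires that the predictor depend on the target only through the references and static features, which is the contract of Section~\ref{sec:protocol}. Once that is fixed, the rest is the elementary two-point calculation above.
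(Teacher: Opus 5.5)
Your proposal is correct and follows the paper's proof: you take two laws that are identical on every sampled class and constant ($1$ versus $\Lambda$) on the unsampled target class, so they agree on all references, and then you minimize $\max\{|v-1|,|v-\Lambda|/\Lambda\}$, whose minimum is attained at $v=2\Lambda/(\Lambda+1)$. You also make explicit two points the paper leaves implicit: the predictor outputs the same value under both laws, and the optimum lies in $[1,\Lambda]$.
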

\begin{proof}
Choose the laws constant within the unmeasured class and identical
elsewhere. Both satisfy any within-class smoothness bound. For a
prediction $v$, the minimum of
$\max\{|v-1|,|v-\Lambda|/\Lambda\}$ is attained at $v=2\Lambda/(\Lambda+1)$, giving the bound.
\end{proof}

\subsection{Anchored interpolation as an energy minimizer}
\label{app:energy}
With $p$, $a_L,a_R$, $h_L,h_R$, and $c_0$ as in
Section~\ref{sec:transfer}, Equation~\eqref{eq:graph} is the minimizer
over trial coordinates $v$ of
\begin{equation}
 \mathcal E(v)=c_0(v-p)^2+
 \gamma\left\{\frac{(v-a_L)^2}{h_L}
             +\frac{(v-a_R)^2}{h_R}\right\}.
 \label{eq:graphenergy}
\end{equation}
Each edge penalty is the integral of squared slope along a linear
segment. Strict convexity yields the unique minimizer
\[
 \widehat a=
 \frac{c_0p+\gamma a_L/h_L+\gamma a_R/h_R}
 {c_0+\gamma/h_L+\gamma/h_R},
\]
which rearranges to Equation~\eqref{eq:graph}. All weights are
nonnegative, so the result lies in the convex hull of $p,a_L,a_R$.

\subsection{Conditional interpolation certificate}
\label{app:certificate}
\begin{proposition}[Conditional interpolation certificate]
\label{prop:graph}
Fix a reference size, family, execution class, and horizon. Assume its true encoded
response has a twice-differentiable extension $a_*(m)$ on the reference
bracket of width $h=h_L+h_R$, with $|a_*''|\le J$.
Suppose $|p-a_*(m)|\le\delta_0$ at the query and each endpoint estimate
has error at most $\xi$. Then
\begin{equation}
 |\widehat a-a_*(m)|
 \le(1-\omega)\delta_0+\omega(\xi+Jh^2/8)=:\Delta .
 \label{eq:graphbound}
\end{equation}
At this size, for a target count in $[1,M]$, Equation~\eqref{eq:excessbound}
applies with $\delta=\Delta$.
\end{proposition}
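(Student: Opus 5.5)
The bound follows from writing $\widehat a-a_*(m)$ as a convex combination of the two component errors and then bounding the error of the within-class linear interpolant. By Equation~\eqref{eq:graph},
\[
\widehat a-a_*(m)=(1-\omega)\bigl(p-a_*(m)\bigr)+\omega\bigl(a_{\rm lin}-a_*(m)\bigr).
\]
All of $c_0$, $\gamma$, $h_L$ and $h_R$ are positive, so $\omega\in[0,1]$. The triangle inequality then gives
\[
|\widehat a-a_*(m)|\le(1-\omega)\delta_0+\omega\,|a_{\rm lin}-a_*(m)|.
\]
The first term is bounded by hypothesis, so everything reduces to bounding the interpolation term by $\xi+Jh^2/8$.

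For that term, I introduce the exact linear interpolant $\ell(m)$ of the true response, built from the values $a_*(m_L)$ and $a_*(m_R)$ with the same weights $h_R/(h_L+h_R)$ and $h_L/(h_L+h_R)$.

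The noise part is $a_{\rm lin}-\ell(m)$. It is a convex combination of the endpoint errors $a_L-a_*(m_L)$ and $a_R-a_*(m_R)$, so its magnitude is at most $\xi$.

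The curvature part is $\ell(m)-a_*(m)$. I bound it with the classical linear-interpolation remainder,
\[
a_*(m)-\ell(m)=\tfrac12 a_*''(\zeta)(m-m_L)(m-m_R)
\]
for some $\zeta$ in the bracket, which gives $|\ell(m)-a_*(m)|\le \tfrac{J}{2}h_Lh_R$. Since $h_Lh_R\le h^2/4$ by AM--GM, this is at most $Jh^2/8$. To avoid invoking the pointwise remainder form directly, I can instead apply Rolle's theorem twice to the auxiliary function $a_*-\ell-c(\cdot-m_L)(\cdot-m_R)$, with $c$ chosen so that it vanishes at $m$. This needs only twice-differentiability, which is exactly what the proposition assumes.

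Combining the noise and curvature parts gives Equation~\eqref{eq:graphbound}. The final sentence then follows by applying the bound of Appendix~\ref{app:excessbound} with $\delta=\Delta$. That bound is stated for an arbitrary coordinate error $|\widehat a-a|\le\delta$ and a true count in $[1,M]$, and here $a=a_*(m)$ is the true encoded value at this reference size.

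The main point to check is the remainder step. It must use the true values $a_*(m_L)$ and $a_*(m_R)$ rather than the noisy ones $a_L$ and $a_R$, which is why I separate the noise part from the curvature part before applying it. It also requires $m_L$ and $m_R$ to lie in the same class as the query, so that the smooth extension $a_*$ spans the whole bracket. The convex-combination step and the invocation of Appendix~\ref{app:excessbound} are routine.
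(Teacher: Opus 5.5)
Your proposal is correct and follows the paper's proof. Both apply the triangle inequality to the convex combination in Equation~\eqref{eq:graph}, pass the endpoint errors through the nonnegative interpolation weights to obtain $\xi$, and use the linear-interpolation remainder to obtain $Jh^2/8$. Your separation of the noisy endpoints from the exact interpolant $\ell$, the sharper intermediate bound $Jh_Lh_R/2$, and the explicit invocation of Equation~\eqref{eq:excessbound} with $a=a_*(m)$ only spell out steps the paper leaves implicit.
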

\begin{proof}
The linear-interpolation remainder is at most $Jh^2/8$.
Its nonnegative weights preserve the endpoint error bound $\xi$.
Apply the triangle inequality to Equation~\eqref{eq:graph};
no independence between the estimates is required.
\end{proof}
Between adjacent reference coordinates $\eta_L<\eta_R$, the transferred
value is $\widehat a=(1-\psi)\widehat a_L+\psi\widehat a_R$ with
$\psi=(\eta-\eta_L)/(\eta_R-\eta_L)$.
For interpolation between sizes, suppose the true encoded
response has $|\partial_\eta^2 a_*|\le J_\eta$ on $[\eta_L,\eta_R]$.
With endpoint error bounds $\Delta_L,\Delta_R$ from the proposition,
the same linear-interpolation remainder gives
$\delta\le(1-\psi)\Delta_L+\psi\Delta_R+
J_\eta(\eta-\eta_L)(\eta_R-\eta)/2$ in Equation~\eqref{eq:excessbound}.
This conditional bound does not cover clamped extrapolation, and finite differences of the references do not certify $J$ or $\delta_0$.
Here $J,\delta_0,\xi$ bound within-class curvature, pooled prediction
error, and reference-curve error, respectively. The latter includes
measurement noise and horizon-transfer approximation.
The result separates calibration spacing from reference uncertainty:
halving $h$ quarters the curvature term but does not reduce $\xi$.
For a purely linear within-class estimate and desired relative error
$\zeta>0$, the conservative condition
\[
 h\le\sqrt{\frac{8[\log(1+\zeta)-\xi]}{J}}
\]
suffices when $J>0$ and $\log(1+\zeta)>\xi$, since the prefactor in
Equation~\eqref{eq:excessbound} is at most one.

\subsection{Prediction contract}
Table~\ref{tab:contract} states what a prediction consumes and produces.
\label{app:contract}
\begin{table}[h]
\centering\small
\caption{Prospective prediction contract. A configuration key is
$z=(o,\pi,d,m)$; $H=WN$ is the requested horizon.}
\label{tab:contract}
\begin{tabular}{@{}p{.25\columnwidth}p{.68\columnwidth}@{}}
\toprule
Category & Content\\
\midrule
Calibration input & Reference configurations and horizons with measured
whole-run fill traffic\\
Target input & $(o,\pi,d,m,N,W)$ and the known binary's static
execution class\\
Prediction output & Fill-equivalent count $\widehat K_H$ and miss rate
$\widehat\rho_H=\widehat K_H/M$\\
Validated scope & Two devices (GB10 and Jetson Thor), two workloads
(the scan of Algorithm~\ref{alg:dynamic-kernel} and CUTLASS FP16 GEMM),
16 KiB logical blocks, and the calibrated configuration envelope\\
\bottomrule
\end{tabular}
\end{table}

\subsection{GEMM instantiation}
\label{app:gemm}
For the static reference of Section~\ref{sec:operand-sharing}, at fixed
scan size and constant $K_0$,
$\log(W\kappa)=\log V-\log(bNK_0)$,
so the cumulative fit differs only by translation.
For the tested FP16 GEMM, dimensions $(M_g,N_g,K_g)$
give tile counts $(n_{\rm r},n_{\rm c},n_{\rm k})=(M_g/128,N_g/128,K_g/64)$.
At reduction step $k$, output-tile CTA $(i,j)$ reads $A_{i,k}$ and $B_{k,j}$;
the $n_{\rm r}n_{\rm c}$ CTAs collectively read $n_{\rm r}+n_{\rm c}$ distinct 16 KiB tiles.
Thus $B_0=16384(n_{\rm r}+n_{\rm c})n_{\rm k}$ bytes, while logical input demand is
$Q=32768n_{\rm r}n_{\rm c}n_{\rm k}$. Set $b=32768$ bytes and $H=n_{\rm r}n_{\rm c}n_{\rm k}/M$,
giving $K_0=M(n_{\rm r}+n_{\rm c})/(2n_{\rm r}n_{\rm c})$ and $\rho=V/Q$.
$B_0$ is a cold-input reference, not simultaneous CTA residency;
both operands compete in the same measured response $V$.
At fixed $(n_{\rm r},n_{\rm c})$, fit the cumulative curve over reduction length $n_{\rm k}$.
Evaluate references at the requested $n_{\rm k}$, encode
$V/B_0$, then transfer along $\log n_{\rm c}$ followed by $\log n_{\rm r}$.
Within the single execution class, Equation~\eqref{eq:graph} with
unchanged weight 4 reduces to $(p+4a_{\rm lin})/5$ on each bracket;
endpoints are clamped. Path, occupancy, pipeline depth and per-step compute
are fixed for each tested binary; $(n_{\rm r},n_{\rm c})$ specifies sharing, not a new
measured feature. 

\end{document}